\colorlet{shadecolor}{yellow}
\newtheorem{theorem}{Theorem}
\newtheorem{corollary}{Corollary}
\begin{document}
\title{QoS Aware Power Allocation and User Selection in Massive MIMO Underlay Cognitive Radio Networks}

%
\author{Shailesh Chaudhari \textit{Student Member, IEEE},  Danijela Cabric, \textit{Senior Member, IEEE}%
\thanks{Shailesh Chaudhari and Danijela Cabric are with the Department of Electrical Engineering, University of California, Los Angeles, 56-125B Engineering IV Building, Los Angeles, CA 90095-1594, USA (email: schaudhari@ucla.edu, danijela@ee.ucla.edu).}
%
%
\thanks{This work has been supported by the National Science Foundation under CNS grant 1149981.}
}


\maketitle

\begin{abstract}
We address the problem of power allocation and secondary user (SU) selection in the downlink from a secondary base station (SBS) equipped with a large number of antennas in an underlay cognitive radio network. A new optimization framework is proposed in order to select the maximum number of SUs and compute power allocations in order to satisfy instantaneous rate or QoS requirements of SUs. The optimization framework also aims to restrict the interference to primary users (PUs) below a predefined threshold using available imperfect CSI at the SBS. In order to obtain a feasible solution for power allocation and user selection, we propose a low-complexity algorithm called Delete-su-with-Maximum-Power-allocation (DMP). Theoretical analysis is provided to compute the interference to PUs and the number of SUs exceeding the required rate. 
The analysis and simulations show that the proposed DMP algorithm outperforms the state-of-the art selection algorithm in terms of serving more users with minimum rate constraints, and it approaches the optimal solution if the number of antennas is an order of magnitude greater than the number of users.

\end{abstract}

\IEEEpeerreviewmaketitle

\begin{IEEEkeywords}
Imperfect CSI, massive MIMO,  power allocation, underlay cognitive radio, user selection, zero-forcing beamforming.
\end{IEEEkeywords}


\section{Introduction}
\label{sec:Introduction}
Due to increasing number of wireless devices and data rate demands, researchers are looking for various techniques to improve the spectrum efficiency of 5G wireless networks and serve a large number of devices in the available spectrum. Massive MIMO and underlay cognitive radio are  being considered for 5G networks in order to accommodate more devices in the available spectrum \cite{andrews2014,boccardi2014, gupta2015}. In a massive MIMO system, a base station equipped with a large number of antennas serves multiple users using beamforming techniques in the same time-frequency resource block  \cite{ngo2013a}. On the other hand, in an underlay cognitive radio (CR) network, a secondary base station (SBS) serves its users (secondary users) while keeping the interference to licensed primary users (PUs) below a specified threshold \cite{biglieri2012}. In underlay CR networks, the SBS transmits the downlink signal in the same time-frequency resource block as the primary transmitter. This is different from traditional interweave cognitive networks where the SBS transmits in an orthogonal time-frequency resource block.

The secondary BS, if equipped with a large number of antennas, can potentially employ beamforming techniques and serve multiple  secondary users (SUs) in the downlink while limiting the interference to primary receivers. However, due to imperfect knowledge of the channels between PUs and the SBS, interference constraints at the PUs, and different rate requirements of SUs, it may not be feasible to serve all the SUs in the network \cite{chaudhari2017a}. Therefore, a judicious selection of SUs and power allocation are required at the SBS in order to simultaneously serve multiple SUs with required rates while limiting the interference to PUs.

\subsection{Related Work}
Underlay CR networks with multiple antenna systems have received attention in recent years, since such networks allow concurrent transmissions from  primary and secondary transmitters, thereby increasing the spectrum efficiency \cite{yiu2012a, du2012b, du2013, tsinos2013a,noam2013, noam2014, wang2015, al-ali2016, xiong2016}. The works in \cite{yiu2012a, du2012b, tsinos2013a, noam2013, noam2014, al-ali2016} consider small-scale MIMO with approximately ten or fewer antennas at the secondary transmitter. These works consider only one SU in the system, therefore they do not require SU selection. User selection mechanism has been partially considered in \cite{du2013, xiong2016}. An indirect selection mechanism is implemented in \cite{du2013} where SUs receiving less than 0dB signal-to-interference-plus-noise-ratio (SINR) are dropped from the downlink transmission. The selection algorithm in \cite{xiong2016} needs to know the number of users to be selected. A massive MIMO system has been employed in the secondary systems in \cite{wang2015, chaudhari2015, chaudhari2017a}. In our previous works \cite{chaudhari2015, chaudhari2017a}, we proposed to use massive MIMO to serve multiple SUs concurrently with primary transmission, while the algorithm in \cite{wang2015} still serves only one SU. A selection algorithm under line-of-sight channels is proposed in \cite{chaudhari2015}, while the feasibility of serving all SUs under Rayleigh fading channels is studied in \cite{chaudhari2017a}. The selection algorithm was not considered in \cite{chaudhari2017a}.

Massive MIMO systems differ from small-scale MIMO systems in \cite{du2012b, du2013, tsinos2013a, al-ali2016} in the design of beamforming (or precoding) vectors. In small-scale MIMO systems, optimum beamforming vectors are computed using iterative algorithms \cite{du2012b, du2013, al-ali2016}, and interference alignment \cite{tsinos2013a}. Such approaches become prohibitively expensive in terms of complexity when used with massive MIMO systems. Using linear beamforming techniques such as zero-forcing (ZF), maximal-ratio combining (MRC) or minimum mean-square error (MMSE), the beamforming vectors can be computed using closed-form expressions without requiring any iterative search if the channels and the selected user set is known. In this paper, we focus on ZF beamforming as it is can also be used to restrict the interference toward PUs.


In an underlay CR network, the interference at primary receivers (PRs) can be eliminated using ZF beamforming if the channels between PRs and the SBS are perfectly known at the SBS. However, due to imperfect CSI in practical networks, there is non-zero leakage interference transmitted towards PRs even when ZF beamforming is used. The magnitude of the interference depends on the power allocated to SUs as well as the set of SUs selected. Therefore, there is a need to design a robust interference control mechanism along with power allocation and user selection in order to limit the interference to PRs below a specified threshold.

\subsection{Summary of Contributions and Outline}
The main contributions of this work are summarized below.

\begin{enumerate}
	\item
	A new optimization framework is proposed to select the maximum number of SUs in the downlink and obtain power allocation for the selected SUs in order to satisfy their instantaneous rate requirements. The interference to PRs is kept below a specified threshold using margin parameters that compensate for CSI estimation errors. The proposed formulation is different from the formulations in \cite{yoo2006, huang2013, huang2012} which aim to maximize the sum-rate of the selected users and do not have interference constraints.	
	\item A new user selection and power allocation algorithm, called Delete-su-with-Maximum-Power-allocation (DMP), is proposed to obtain a feasible solution for the NP-hard optimization problem. Theoretical analysis of the algorithm is presented in order to compute the average number of SUs achieving the required rate, and average interference to primary receivers.	The proposed algorithm is shown to achieve near-optimal results if the number of antennas at SBS is an order of magnitude larger than the number of users.	
	\item The user selection algorithm in \cite{huang2012} is extended for application in an underlay CR setting. The extended algorithm, called Modified Delete-Minimum-Lambda (MDML), also uses margin parameters and is robust against imperfect CSI. The proposed DMP selection algorithm is shown to serve more users than MDML in an underlay CR network.
\end{enumerate}

\textit{Outline:} This paper is organized as follows. The system model and the optimization problem are presented in Section \ref{sec:Model_problem}. The DMP and MDML algorithms are presented in Section \ref{sec:solution}. Section \ref{sec:analysis_selection} presents the theoretical analysis and the optimality of the DMP algorithm. Simulation results are presented in Section \ref{sec:results}. Finally, the paper is concluded in Section \ref{sec:Conclusion}.

\textit{Notations:} We denote vectors by bold, lower-case letters, e.g., $\mathbf{h}$. Matrices are denoted by bold, upper case letters, e.g., $\mathbf{G}$. Scalars are denoted by non-bold letters e.g. $L$. Transpose, conjugate, and Hermitian of vectors and matrices are denoted by $(.)^T$, $(.)^*$, and $(.)^H$, respectively. The norm of a vector $\mathbf{h}$ is denoted by $||\mathbf{h}||$. $\Gamma(k,\theta)$ is the Gamma distribution with shape parameter $k$ and scale parameters $\theta$, whereas $\Gamma(x)$ is the Gamma function. The $i$-th element in the set $\mathcal{S}$ is denoted by $\mathcal{S}(i)$ and the cardinality of the set is denoted by $|\mathcal{S}|$. An empty set is denoted by $\emptyset$.


\section{System Model and Problem Formulation}
\label{sec:Model_problem}
\subsection{System Model}
Consider an underlay CR network with one SBS and $K$ SUs. The SBS is equipped with $M (\gg K)$ antennas. This network coexists with $L$ primary transmitter-receiver pairs. Let $\mathcal{T}=\{1,2,...,L\}$ be the set of primary transmitters (PTs) and  $\mathcal{R}=\{1,2,...,L\}$ be the set of PRs. The SUs, PRs and PTs are assumed to be single antenna terminals. Let $\mathbf{h}_{k} =\sqrt{\beta_k} \mathbf{\tilde{h}}_{k}  \in \mathbb{C}^{M \times 1}$ be the channel between SU-$k$ and the SBS where $\beta_k$ is the slow fading coefficient accounting for attenuation and shadowing and $\mathbf{\tilde{h}}_{k} \sim \mathcal{CN}(0,\mathbf{I})$ \cite{ngo2013a, chien2016, marzetta2010}. The channel between PT-$l$ and SU-$k$ is denoted by ${h_{lk}}=\sqrt{\beta_{lk}} {\tilde{h}_{lk}} \in \mathbb{C}, l\in \mathcal{T}, k=\{1,2,...,K\}$, and ${\tilde{h}_{lk}} \sim \mathcal{CN}(0,1)$. Similarly, the channel between PR-$l$  and the SBS is $\mathbf{h}_{l0}=\sqrt{\beta_{l0}} \mathbf{\tilde{h}}_{l0} \in \mathbb{C}^{M\times 1}, l\in \mathcal{R}, \mathbf{\tilde{h}}_{l0} \sim \mathcal{CN}(0,\mathbf{I})$. We consider a time-division duplex (TDD) systems and the channels are assumed to be reciprocal. { The network is depicted in Fig. \ref{fig:system_model}.}

\begin{figure}
	\centering
	\includegraphics[width=0.7\columnwidth]{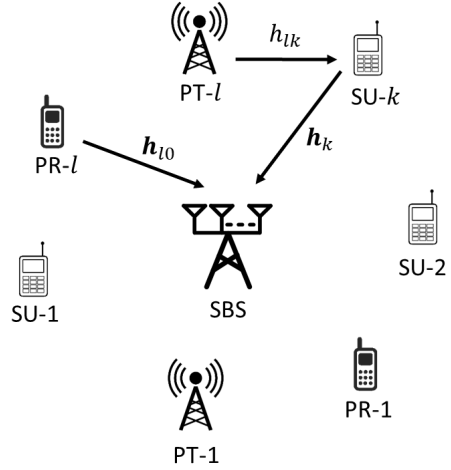}
	\caption{Network model showing channels between PT-$l$ and SU-$k$ ($h_{lk}$), PR-$l$ and the SBS ($\mathbf{h}_{l0}$), and SU-$k$ and the SBS ($\mathbf{h}_k$).}
	\label{fig:system_model}
	\vspace{-5mm}
\end{figure}

The SBS has imperfect knowledge of the channels $\mathbf{h}_{l0}, \mathbf{h}_{k}$. The estimates of channels are given by $\mathbf{\hat{h}}_{l0}=\mathbf{{h}}_{l0} + \mathbf{\Delta}_{l0}$ and $\mathbf{\hat{h}}_{k} = \mathbf{h}_{k} + \mathbf{\delta}_k$, respectively, where $\mathbf{\Delta}_{l0} \sim \mathcal{CN}(0,\sigma^2_{\Delta}\mathbf{I})$, and $\mathbf{\delta}_k \sim \mathcal{CN}(0,\sigma^2_{\delta}\mathbf{I})$ are the estimation errors.
We model the quality of CSI between primary and secondary system using $\sigma^2_\Delta = \frac{\sigma^2_w}{P_p}$, while the quality of CSI within the secondary system is modeled using $\sigma^2_\delta = \frac{\sigma^2_w}{P^0}$, where $P_p$ and $P^0$ are transmit powers from PTs and  the SBS, respectively, and $\sigma^2_w$ is the noise power at the SBS and SUs \cite{aquilina2015,razavi2014,maurer2011}. We consider block fading where the channels remain constant for a finite coherence interval.

Let $\mathcal{S}_0=\{1,2,...,K\}$ be the set of all SUs and $\mathcal{S} \subseteq \mathcal{S}_0$ be the set of SUs considered for downlink transmission and $P_k, k\in \mathcal{S}$ be the power allocated to SU-$k$ in the downlink when set ${\mathcal{S}}$ is selected. The ZF transmit vector for SU-$k$ depends on $\mathcal{S}$ and is denoted by {$\mathbf{v}^{\mathcal{S}}_k \in {\mathbb{C}^{M\times 1}}, k \in \mathcal{S}$}. The unit-norm ZF vectors are computed using \cite{huang2012}:
{
\begin{align}
\nonumber \mathbf{v}^{\mathcal{S}}_{\mathcal{S}(i)} &= \frac{\left[\mathbf{G_\mathcal{S}(G_\mathcal{S}^H G_\mathcal{S})^{-1}}\right]_i}{
||\left[\mathbf{G_\mathcal{S}(G_\mathcal{S}^H G_\mathcal{S})^{-1}}\right]_i||},
\\  \mathbf{G}_\mathcal{S} &= [\mathbf{\hat{h}}_{\mathcal{S}(1)}, \mathbf{\hat{h}}_{\mathcal{S}(2)},..., \mathbf{\hat{h}}_{\mathcal{S}(|\mathcal{S}|)}, \mathbf{\hat{h}}_{\mathcal{R}(1)0},...,\mathbf{\hat{h}}_{\mathcal{R}(L)0}],
\label{eq:zf_vectors}
\end{align}}
where $\mathcal{S}(i)$ is the $i$-th entry in $\mathcal{S}$, and $[\mathbf{A}]_i$ is the $i$-th column of matrix $\mathbf{A}$. { In the above equation, the matrix $\mathbf{G}_{\mathcal{S}} \in \mathbb{C}^{M \times (|\mathcal{S}| + L)}$ indicates the channel between SUs in set $\mathcal{S}$, PUs and the SBS.} It should be noted that the ZF vector $\mathbf{v}^{\mathcal{S}}_k$ is in the null-space of the estimated channels to SU-$j, j \in \mathcal{S}, j\neq k$. It is also in the null-space of the estimated channels to PRs. Therefore, $\mathbf{v}^{\mathcal{S}}_k$  satisfies $\mathbf{\hat{h}}^H_j \mathbf{v}^{\mathcal{S}}_k=0, j,k\in \mathcal{S}, j\neq k$ and $\mathbf{\hat{h}}^H_{l0}\mathbf{v}^{\mathcal{S}}_k=0, l \in \mathcal{R}$. 
 
The ZF beamforming vectors are not in the null-space of the true channel $\mathbf{h}^H_{l0}$ when the channel estimation error $\mathbf{\Delta}_{l0}$ is non-zero. Therefore, the interference caused at PR-$l, l\in \mathcal{R}$ is non-zero and can be expressed as:
 \begin{align}
I_l=\sum_{k \in S} I_{kl} = \sum_{k \in S} P_k |\mathbf{h}^H_{l0} \mathbf{v}^{\mathcal{S}}_k|^2, l \in \mathcal{R}, k \in \mathcal{S},
\label{eq:I_l}
 \end{align}
where $I_{kl}$ is the interference contribution of data stream of SU-$k$ towards PR-$l$. The interference $I_l$ depends on power allocation as well as the set of SUs selected. Similarly, the inter-SU interference at SU-$k$ due to the signal transmitted toward SU-$j$ can be expressed as
\begin{align}
I_{jk} = P_j |\mathbf{  h}^H_{k} \mathbf{v}^{\mathcal{S}}_j|^2, k,j \in \mathcal{S}, j \neq k.
\label{eq:I_jk}
\end{align}
Finally, the reverse interference at SU-$k$ is the sum of powers received from PTs: $I_{k} = \sum \limits_{l \in \mathcal{T}}P_p |{h_{lk}}|^2$, where $P_p$ is the power transmitted by PT-$l$.


\subsection{Optimization Problem}
\label{sec:problem}
Our goal is to select the maximum number of SUs for downlink transmission in order to satisfy specific instantaneous rate of $R^0_k$ to selected SUs, while keeping the interference towards PRs below $I^0$. The total available power at the SBS is $P^0$. Note that the estimated interference to PR-$l, l \in \mathcal{R}$ based on the estimated channel is $\hat{I}_l= \sum_{k \in \mathcal{S}} P_k |\mathbf{\hat{h}}^H_{l0} \mathbf{v}^{\mathcal{S}}_k|^2 = 0$. Since the true interference $I_l$ is non-zero, we add a margin parameter $\epsilon_1$ to define $\tilde{I}_l = \sum_{k \in \mathcal{S}} P_k (|\mathbf{\hat{h}}^H_{l0} \mathbf{v}^{\mathcal{S}}_k|^2 + \epsilon_1) $ as the new estimated value of the interference with margin.

Further, the instantaneous rate achieved at SU-$k$, when a set $\mathcal{S}$ is selected, is
\begin{align}
R^{\mathcal{S}}_k = \log_2 \left( 1 + \frac{P_k |\mathbf{ h}_{k}^H \mathbf{v}^{\mathcal{S}}_k|^2}{\sigma^2_w + I_k+ \sum \limits_{j \in \mathcal{S}, j \neq k} I_{jk} }\right), k \in \mathcal{S},
\label{eq:R_k_S}
\end{align}
where $\sigma^2_w$ is the noise power at the SU. Due to ZF beamforming, the estimated inter-SU interference will be zero, i.e., $\hat{I}_{jk} =P_j |\mathbf{\hat{h}}^H_k \mathbf{v}^\mathcal{S}_j|^2= 0$ due to $\mathbf{\hat{h}}^H_k \mathbf{v}^\mathcal{S}_j=0, k\neq j$. We use a margin parameter $\epsilon_2$ to compensate for the estimation error in the channels between SBS and SUs. Therefore, the estimated instantaneous rate with margin becomes
\begin{align}
\hat{R}^{\mathcal{S}}_k =  \log_2 \left( 1 + \frac{P_k |\mathbf{ \hat{h}}_{k}^H \mathbf{v}^{\mathcal{S}}_k|^2}{\sigma^2_w + {I}_{k} + \epsilon_2}\right), k \in \mathcal{S}.
\label{eq:R_k_S_est}
\end{align}

{ Note that, unlike $I_l$ and $I_{jk}$, the reverse interference term $I_k$ can be measured at SU-$k$ by observing combined signal received from all PUs during the channel estimation phase.} The optimization problem can then be formulated as follows:
\begin{align}
&\max_{\{\mathcal{S},P_k\}} |\mathcal{S}|
\label{eq:optim2_start}
\\ \text{Subject to}&:
 \tilde{I}_{l}=\sum_{k \in \mathcal{S}} P_k \left(|\mathbf{\hat{h}}^H_{l0} \mathbf{v}^{\mathcal{S}}_k|^2 + \epsilon_1 \right) \leq I^0,~ l \in \mathcal{R}
\label{eq:optim2_const1}
\\ &\hat{R}^{\mathcal{S}}_k  \geq R^0_k,~ k \in \mathcal{S},
\label{eq:optim2_const2}
\\ &\sum_{k\in \mathcal{S}} P_k \leq P^0, P_k \geq 0.
\label{eq:optim2_end}
\end{align}
Selection of parameters $\epsilon_1$ and $\epsilon_2$ is discussed in Section \ref{sec:selection_of_parameters}. By substituting $|\mathbf{\hat{h}}^H_{l0} \mathbf{v}^{\mathcal{S}}_k|^2 = 0$ and rearranging (\ref{eq:optim2_const2}), we obtain the following equivalent optimization problem:
\begin{align}
&\max_{\{\mathcal{S},P_k\}} |\mathcal{S}|
\label{eq:optim3_start}
\\\text{Subject to}&: \sum_{k\in \mathcal{S}} P_k \epsilon_1    \leq I^0,
\label{eq:optim3_const1}
\\ P_k &\geq \frac{(2^{R^0_k} -1)\left(\sigma^2_w + {I}_k + \epsilon_2 \right)}{|\mathbf{  \hat{h}}_k^H \mathbf{v}^{\mathcal{S}}_k|^2}, k \in \mathcal{S},
\label{eq:optim3_middle}
\\& \sum_{k\in \mathcal{S}} P_k \leq P^0.
\label{eq:optim3_end}
\end{align}
Using binary selection variables $s_k \in \{0,1\}$ to indicate whether SU-$k$ is selected ($s_k=1$) or not  ($s_k=0$), we can restate the above problem as:
\begin{align}
(\mathbf{P1}) \max_{\{s_k,P_k\}} &\sum_{k=1}^{K} s_k
\label{eq:optim4_start}
\\\text{Subject to}&:\sum_{k=1}^{K} s_k P_k  \leq \min \left( {I^0/\epsilon_1}, {P^0} \right),
\label{eq:optim4_const1}
\\ P_k &\geq  s_k \frac{(2^{R^0_k} -1)\left(\sigma^2_w +  {I}_k + \epsilon_2\right)}{|\mathbf{\hat{h}}_k^H \mathbf{v}^{\mathcal{S}}_k|^2}, k \in \mathcal{S}_0
\label{eq:optim4_const2}
\\s_k&=1,s_j=0, k \in \mathcal{S},  j \in \mathcal{S}_0 \backslash \mathcal{S}.
\label{eq:optim4_end}
\end{align}
{ The constraint (\ref{eq:optim4_const1}) is obtained by combining (\ref{eq:optim3_const1}) and (\ref{eq:optim3_end}). This constraint indicates that the power allocation is controlled by the interference limit $I^0$ if $I^0/\epsilon_1 \leq P^0$, while it is controlled by the transmit power limit $P^0$ if $I^0/\epsilon_1 > P^0$.}

\section{Selection algorithms and power allocation schemes}
\label{sec:solution}
The optimization problem (\ref{eq:optim4_start})-(\ref{eq:optim4_end}) to compute power allocations and selection variables is a non-convex mixed integer program and an NP-hard problem. Note that the computation of power allocations and selection variables depend on ZF vectors $\mathbf{v}^\mathcal{S}_k$ which in turn depend on the selected users. In order to solve this chicken-and-egg problem, we choose a particular set $\mathcal{S}$ and obtain ZF vectors and power allocation for that set. For a given selected set $\mathcal{S}$, the problem (\ref{eq:optim4_start})-(\ref{eq:optim4_end}) reduces to the following feasibility problem with power allocations as variables:
\begin{align}
	\text{Find}~~& P_k
	\label{eq:optim5_start}
	\\ \text{Subject to}&: \sum_{k\in \mathcal{S}} P_k  \leq \min \left( {I^0/\epsilon_1}, {P^0} \right),
	\label{eq:optim5_const1}
	\\ P_k &\geq  \frac{(2^{R^0_k} -1)\left(\sigma^2_w +  {I}_k + \epsilon_2\right)}{|\mathbf{\hat{h}}_k^H \mathbf{v}^{\mathcal{S}}_k|^2}, k \in \mathcal{S}.
	\label{eq:optim5_end}
\end{align}
If the power allocation
\begin{align}
P^{\mathcal{S}}_k =  \frac{(2^{R^0_k} -1)\left(\sigma^2_w +  {I}_k + \epsilon_2\right)}{|\mathbf{  \hat{h}}^H_k \mathbf{v}^{\mathcal{S}}_k|^2}, k \in \mathcal{S},
\label{eq:P_k_vup}
\end{align}
satisfies the constraint in (\ref{eq:optim5_const1}), then it provides the solution to the power allocation problem for the set $\mathcal{S}$. Note that the above power allocation attempts to satisfy specific instantaneous rate of SU-$k$. Therefore, it is referred to as Qos-Aware-Power-allocation.

Let $K^*$ be the cardinality of optimal sets. The problem (\ref{eq:optim4_start})-(\ref{eq:optim4_end}) can have multiple optimal sets, since multiple sets of the $K^*$ can satisfy the constraints (\ref{eq:optim4_const1})-(\ref{eq:optim4_end}). One approach of obtaining one of the optimal sets is to consider all possible sets of cardinalities $K, K-1,K-2,...,K^*$ one-by-one in decreasing order of cardinality, compute ZF vectors and power allocations by (\ref{eq:zf_vectors}) and (\ref{eq:P_k_vup}), respectively, and check whether the constraints in (\ref{eq:optim5_const1}) are satisfied. Such approach of user selection is prohibitively complex and impractical since the number of sets to be considered increases exponentially with $K$. As an example, for $K=20$ and $K^*=5$, the minimum number of sets to be considered are $\sum_{K'=K^*+1}^{K}$ ${K}\choose{K'}$ $\approx 1$ million. 
Therefore, there is a need to design a low-complexity algorithm to select users and obtain power allocations. 

As our goal is to maximize the cardinality of the set $\mathcal{S}$, we propose an approach which considers only one set of a particular cardinality that is obtained by dropping the SU that requires maximum power in a higher cardinality set. The selection algorithm  is initialized by selecting all the SUs, i.e., $\mathcal{S}= \mathcal{S}_0$. ZF vectors $\mathbf{v}^{\mathcal{S}}_k$ and power allocations $P^{\mathcal{S}}_k$ are computed for the set $\mathcal{S} = \mathcal{S}_0$ using (\ref{eq:zf_vectors}) and (\ref{eq:P_k_vup}), respectively. Then, the condition $\sum_{k \in \mathcal{S}} P^{\mathcal{S}}_k \leq \min(I^0/\epsilon_1, P^0)$ is checked. If the condition is not satisfied, the SU with maximum power allocation is dropped from the set and a set $\mathcal{S} = \mathcal{S} \backslash \{j\}$ of lower cardinality is considered, where $j=\arg \max_{k \in \mathcal{S}} P^{\mathcal{S}}_k$. The ZF vectors and power allocations are re-computed for the new set using (\ref{eq:zf_vectors}) and (\ref{eq:P_k_vup}), respectively. This process is continued until the constraint $\sum_{k \in \mathcal{S}} P^{\mathcal{S}}_k \leq \min(I^0/\epsilon_1, P^0)$ is satisfied. Since the SU with maximum power allocation is dropped in each iteration, the algorithm is called Delete-su-with-Maximum-Power (DMP). Note that dropping of the SU that requires the maximum power causes maximum reduction in $\sum_{k}s_k P_k$ in constraint (\ref{eq:optim4_const1}). This increases the probability that SUs included in set $\mathcal{S}\backslash \{j\}$ will satisfy the constraint (\ref{eq:optim4_const1}). The algorithmic steps are summarized in Algorithm \ref{alg:DMP}. { It should be noted that the SUs which require excess power to satisfy their rate requirements will not be selected by the DMP. For example, if $P^{\mathcal{S}}_k =  \frac{(2^{R^0_k} -1)\left(\sigma^2_w +  {I}_k + \epsilon_2\right)}{|\mathbf{  \hat{h}}^H_k \mathbf{v}^{\mathcal{S}}_k|^2} > \min(I^0/\epsilon_2, P^0)$, then SU-$k$ will not be selected.}

\begin{algorithm}[t]
	\caption{SU Selection Algorithm: DMP}
	\label{alg:DMP}	
	\begin{algorithmic}[1]
		\Statex { Input: channel estimates $\hat{\mathbf{h}}_k, \hat{\mathbf{h}}_{l0}$, reverse interference $I_k$, margins $\epsilon_1, \epsilon_2$, rate requirements $R^0_k$.}
		\State Select all SUs, i.e., $\mathcal{S}=\mathcal{S}_0$ and $s_k=1, \forall k$.
		\State Compute ZF vectors $\mathbf{v}^{\mathcal{S}_0}_k$ and power allocations $P^{\mathcal{S}_0}_k$ using (\ref{eq:zf_vectors}) and (\ref{eq:P_k_vup}), respectively.
		\While{$\mathcal{S}\neq \emptyset $}
		\If{$\sum_{k=1}^{K}s_k P^{\mathcal{S}}_k > \min \left( {I^0/\epsilon_1}, {P^0} \right)$}
		\State Remove SU with max $P^{\mathcal{S}}_k$:
		\label{step:drop}
		\Statex~~~~~~~~~~~~~~a. $j = \arg \max_{k \in \mathcal{S}} P^{\mathcal{S}}_k$.
		\Statex~~~~~~~~~~~~~~b. $\mathcal{S} \leftarrow \mathcal{S} \backslash \{j\}$, $s_{j}=0$.
		\State Update vectors and power allocations:
		\label{step:update}
		\Statex~~~~~~~~~~~~~~a. Compute $\mathbf{v}^{\mathcal{S}}_k$ for set $\mathcal{S}$ using (\ref{eq:zf_vectors}).
		\Statex~~~~~~~~~~~~~~b. Compute $P^{\mathcal{S}}_k$ using (\ref{eq:P_k_vup}).
		\Else~Stop.
		\EndIf
		\EndWhile
		\Statex { Output: set of selected SUs $\mathcal{S}_1^*=\mathcal{S}$, power allocations $P^{\mathcal{S}_1^*}_k = P^{\mathcal{S}}_k$.}
	\end{algorithmic}
\end{algorithm}

The selected set is denoted by $\mathcal{S}_1^*$ and the cardinality of the selected set is $K^*_1 = |\mathcal{S}^*_1|$. Due to imperfect CSI, all the selected SUs may not achieve the rate $R^0_k$. Therefore, we quantify the performance of the algorithm by $K^{**}_1 (<K^*_1)$ which is the number of SUs that achieve rate higher than $R^0_k$:
\begin{align}
K_1^{**} = \sum_{k \in \mathcal{S}^{*}_1} \mathbbm{1} (R^{\mathcal{S}^*_1}_k \geq R^0_k),
\end{align}
where $\mathbbm{1}(.)$ is the indicator function.

We also consider a low-complexity version of this algorithm where step \ref{step:update} of vector and power allocation update in Algorithm \ref{alg:DMP} is omitted. The set selected with this modified version without the vector is denoted by $\mathcal{S}_2^*$ and its cardinality by $K^{*}_2 = |\mathcal{S}_2^*|$. Further,  $K^{**}_2=\sum_{k \in \mathcal{S}^{*}_2} \mathbbm{1} (R^{\mathcal{S}^*_2}_k \geq R^0_k),$ is the number of SUs exceeding the required rate in DMP without vector update.

\begin{algorithm}[t]
	\caption{SU Selection Algorithm: MDML}
	\label{alg:MDML}	
	\begin{algorithmic}[1]
		\Statex { Input: channel estimates $\hat{\mathbf{h}}_k, \hat{\mathbf{h}}_{l0}$, reverse interference $I_k$, margins $\epsilon_1, \epsilon_2$.}
		\State Select all SUs, i.e., $\mathcal{S}=\mathcal{S}_0$ and $s_k=1, \forall k$.
		\State Compute $\mathbf{v}^{\mathcal{S}_0}_k, \lambda^{\mathcal{S}_0}_k, P^{\mathcal{S}_0}_k, \hat{R}(\mathcal{S}_0)$ using (\ref{eq:zf_vectors}), (\ref{eq:lambda_k}), (\ref{eq:P_k_wf}), and (\ref{eq:sum_rate}), respectively.
		\While{$\mathcal{S}\neq \emptyset$}	
		\State Delete SU with minimum lambda:
		\Statex~~~~~~~~~ $j = \arg \min_{k \in \mathcal{S}} \lambda^{\mathcal{S}}_k$.
		\Statex~~~~~~~~~ $\mathcal{S'} \leftarrow \mathcal{S}\backslash \{j\}$.
		\State Compute $\mathbf{v^{\mathcal{S'}}_k}, \lambda^{\mathcal{S'}}_k, P^{\mathcal{S'}}_k, \hat{R}(\mathcal{S'})$ using (\ref{eq:zf_vectors}), (\ref{eq:lambda_k}), (\ref{eq:P_k_wf}), and (\ref{eq:sum_rate}), respectively.
		\If{$\hat{R}(\mathcal{S'}) > \hat{R}(\mathcal{S})$}
		\Statex~~~~~~~~ $\mathcal{S} \leftarrow \mathcal{S'}, s_j =0$.
		\Else~ Stop				
		\EndIf
		\EndWhile
		\Statex { Output: set of selected SUs $\mathcal{S}_M^*=\mathcal{S}$, power allocations $P^{\mathcal{S}_M^*}_k = P^{\mathcal{S}}_k$.}
		
	\end{algorithmic}
\end{algorithm}
\subsubsection{MDML Selection Algorithm}
\label{sec:DML}
We extend the Delete-Minimum-Lambda (DML) selection scheme presented in \cite{huang2012} to underlay CR network with imperfect CSI. The DML algorithm in \cite{huang2012} selects users while maximizing the sum-rate of selected users. This approach does not take into account the rate constraints $R^{0}_k$ of SUs. It also does not take into account the imperfect CSI. Since DML was proposed for a primary massive MIMO network (CR network was not considered), it also does not include the reverse interference received at SU from primary transmitters. We modify the algorithm to include reverse interference and margin parameters for robustness against imperfect CSI. The Modified-DML (MDML) is described below.

In MDML, the equivalent channel gain between SU-$k, k \in \mathcal{S}$ and SBS is defined as:
\begin{align}
\lambda^{\mathcal{S}}_k = \frac{|\mathbf{  \hat{h}}_k^H \mathbf{v}^{\mathcal{S}}_k|^2}{\sigma^2_w +  {I}_k + \epsilon_2}, k \in \mathcal{S}.
\label{eq:lambda_k}
\end{align}
The power allocation for the SU included in the set $\mathcal{S}$ is obtained by water-filling. In order to satisfy the condition (\ref{eq:optim4_const1}), the maximum power level of $\min \left( {I^0/\epsilon_1}, {P^0} \right)$ is used to compute the power allocation by water-filling as shown below
\begin{align}
P^{\mathcal{S}}_k =  p^{\mathcal{S}}_k/ \lambda^{\mathcal{S}}_k,~
 p^{\mathcal{S}}_k = (\mu \lambda^{\mathcal{S}}_k -1)^{+},  k \in \mathcal{S},
\label{eq:P_k_wf}
\end{align}
where $(x)^+ = \max \{x,0\}$, and $\mu$ is the water level satisfying
\begin{align}
\sum_{k \in \mathcal{S}} \left(\mu - \frac{1}{\lambda^{\mathcal{S}}_k}\right) = \min \left( \frac{I^0}{\epsilon_1}, {P^0} \right).
\label{eq:water_level}
\end{align}
Further, the estimated sum rate for SUs included in $\mathcal{S}$ can be written as
\begin{align}
\hat{R}(\mathcal{S}) = \sum_{k \in \mathcal{S}} \log_2 \left(1+ P^{\mathcal{S}}_k \lambda^{\mathcal{S}}_k\right)
\label{eq:sum_rate}.
\end{align}
The MDML algorithm drops the SU with minimum $\lambda^{\mathcal{S}}_k$, if dropping the SU results in increased sum-rate. The algorithmic steps are summarized in Algorithm \ref{alg:MDML}. The selected set of SUs under this algorithm is denoted by $\mathcal{S}^*_M$.

\section{Analysis of DMP algorithm}
\label{sec:analysis_selection}
In this section, we provide analysis to compute $\mathbb{E}[K^{*}_1]$ and $\mathbb{E}[K^{**}_1]$ under fading channels $\mathbf{h}_k, {h_{lk}}$ and $\mathbf{h}_{l0}$. We also analyze the average interference to primary receivers. Since the coefficients $\beta_k, \beta_{lk}, \beta_{l0}$ change slowly over time, they are assumed to be constant in the analysis \cite{marzetta2010}.

\subsubsection{Average number of SUs served}
\label{sec:E_K_star}
The average number of SUs served using DMP is computed as follows:
\begin{align}
\mathbb{E}[K^{*}_1] = \sum \limits_{k=1}^{K} k \sum \limits_{\mathcal{S}\in \mathcal{S}_k } f(\mathcal{S})g(\mathcal{S}) ,
\end{align}
where $\mathcal{S}_k$ is the set of all sets of cardinality $k$, $f(\mathcal{S})$ is the probability of that the condition $\sum_{k\in S}P^{\mathcal{S}}_k \leq \min (I^0/\epsilon_1, P^0)$ is satisfied:
\begin{align}
f(\mathcal{S}) &= \Pr \left(\sum \limits_{k \in S} P^{\mathcal{S}}_k \leq \min \left(\frac{I^0}{\epsilon_1}, P^0\right) \right), 
\label{eq:f_S}
\end{align}
and $g(\mathcal{S})$ is the probability of arriving at set $\mathcal{S}$ during the algorithmic iterations. Since the set $\mathcal{S}_0$ is always considered, we have $g(\mathcal{S}_0)=1$ and $g(\mathcal{S}), \mathcal{S} \subset \mathcal{S}_0$ can be obtained using the following recursive expression:
\begin{align}	
g(\mathcal{S}) &= \sum_{j, j \notin \mathcal{S} } g(\mathcal{S} \cup \{j\}) P'\left(\{\mathcal{S} \cup \{j\}\} \backslash \{j\} \right),
\label{eq:g_S}
\end{align}
where $ P'\left(\{\mathcal{S} \cup \{j\}\} \backslash \{j\} \right) =  P'(\mathcal{S^+} \backslash \{j\})$ is the probability of dropping SU-$j$ from set $\mathcal{S^+} = \mathcal{S} \cup \{j\}$. This probability can be expressed as:
{\small \begin{align}
\nonumber P'(\mathcal{S}^+ \backslash \{j\} ) &=  (1-f(\mathcal{S}^+)) \Pr \left(P^{\mathcal{S}^+}_j > P^{\mathcal{S}^+}_1,..., P^{\mathcal{S}^+}_j > P^{\mathcal{S}^+}_{|\mathcal{S}^+|}\right),
\\&=(1-f(\mathcal{S}^+)) \int \limits_{0}^{\infty} p_{P^{\mathcal{S}^+}_j}(x)\prod\limits_{i \in \mathcal{S}^+, i\neq j}f_{P^{\mathcal{S}^+}_i}(x)dx,
\label{eq:P_dash_S}
\end{align}}
where $p_{P^{\mathcal{S}^+}_j}(x)$ and $f_{P^{\mathcal{S}^+}_i}(x)$ are the probability density function (pdf) of $P^{\mathcal{S}^+}_j$ and the cumulative distribution function (cdf) of $P^{\mathcal{S}^+}_i$, respectively. In order to evaluate (\ref{eq:f_S}), we need distributions of $P^\mathcal{S}_k$ which can be obtained from Theorem \ref{thm:pj_dist}. The distribution of $\sum_{k \in S}P^{\mathcal{S}_k}$ is required to evaluate (\ref{eq:P_dash_S}) which can be obtained from Corollary \ref{cor:sum_pj_dist}.

\begin{theorem}
	The power allocation $P^{\mathcal{S}}_k$ in (\ref{eq:P_k_vup}) is a Gamma random variable with shape and scale parameters $\kappa^p_k$ and $\theta^p_k$: $P^{\mathcal{S}}_k \sim \Gamma(\kappa^p_k, \gamma^{\mathcal{S}}_k \theta^p_k)$, where
	 \begin{align}
		\nonumber \kappa^p_k &=  \frac{\left(\sigma^2_w +\epsilon_2+ \sum_{l\in \mathcal{T}}P_p \beta_{lk}\right)^2}{ \sum_{l\in \mathcal{T}} (P_p\beta_{lk})^2},
		\\\theta^p_k &= \frac{\sum_{l\in \mathcal{T}}(P_p \beta_{lk})^2 }{\sigma^2_w +  \epsilon_2  +  \sum_{l\in \mathcal{T}} P_p\beta_{lk}}.
		\label{eq:thm1_1}
		\end{align}
	Similarly, for DMP without the vector update step, we have: $P^{\mathcal{S}_0}_k \sim \Gamma(\kappa^p_k, \gamma^{\mathcal{S}_0}_k \theta^p_k)$, where
	\begin{align}
		\nonumber \gamma^{\mathcal{S}}_k = \frac{2^{R^0_k}-1}{(\beta_k + \sigma^2_{\delta}) (M -|\mathcal{S}|-L+1)},
		\\ \gamma^{\mathcal{S}_0}_k = \frac{2^{R^0_k}-1}{(\beta_k + \sigma^2_{\delta}) (M -K-L+1)}.
		\label{eq:thm1_2}
		\end{align}
	\label{thm:pj_dist}
\end{theorem}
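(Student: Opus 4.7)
The plan is to decouple the numerator and the denominator of
\[
P^{\mathcal{S}}_k = \frac{(2^{R^0_k}-1)(\sigma^2_w + I_k + \epsilon_2)}{|\mathbf{\hat{h}}_k^H \mathbf{v}^{\mathcal{S}}_k|^2}
\]
by replacing the denominator with its deterministic massive-MIMO limit (channel hardening) and then approximating the resulting numerator by a moment-matched Gamma random variable. The claimed distribution then follows from the Gamma scaling property.

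First I would identify $|\mathbf{\hat{h}}_k^H \mathbf{v}^{\mathcal{S}}_k|^2$ with $\|\mathbf{P}_\perp \mathbf{\hat{h}}_k\|^2$, where $\mathbf{P}_\perp$ is the orthogonal projector onto the null space of the remaining $|\mathcal{S}|+L-1$ columns of $\mathbf{G}_{\mathcal{S}}$; this equivalence is easy to verify directly from the pseudo-inverse formula in (\ref{eq:zf_vectors}). Since $\mathbf{\hat{h}}_k = \mathbf{h}_k + \boldsymbol{\delta}_k \sim \mathcal{CN}(0,(\beta_k+\sigma^2_\delta)\mathbf{I})$ is independent of the other estimated channels, conditioning on those estimates makes $\|\mathbf{P}_\perp \mathbf{\hat{h}}_k\|^2 \sim \Gamma(M-|\mathcal{S}|-L+1,\,\beta_k+\sigma^2_\delta)$. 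In the massive-MIMO regime $M\gg K$, the relative variance $1/(M-|\mathcal{S}|-L+1)$ is small, so channel hardening justifies substituting the mean $(\beta_k+\sigma^2_\delta)(M-|\mathcal{S}|-L+1)$ for the denominator, which yields $P^{\mathcal{S}}_k \approx \gamma^{\mathcal{S}}_k(\sigma^2_w + I_k + \epsilon_2)$ with $\gamma^{\mathcal{S}}_k$ exactly as in (\ref{eq:thm1_2}).

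Next I would approximate $Y := \sigma^2_w + I_k + \epsilon_2$ by a Gamma random variable via moment matching. Because $I_k = \sum_{l\in \mathcal{T}} P_p\beta_{lk}|\tilde{h}_{lk}|^2$ is a weighted sum of independent unit-mean exponentials with generally unequal weights, $Y$ is hypo-exponential and not exactly Gamma. A short calculation gives $\mathbb{E}[Y]=\sigma^2_w+\epsilon_2+\sum_l P_p\beta_{lk}$ and $\mathrm{Var}(Y)=\sum_l(P_p\beta_{lk})^2$; solving the two moment equations $\kappa^p_k\theta^p_k=\mathbb{E}[Y]$ and $\kappa^p_k(\theta^p_k)^2=\mathrm{Var}(Y)$ reproduces the expressions in (\ref{eq:thm1_1}). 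The Gamma scaling rule $Y\sim\Gamma(\kappa,\theta)\Rightarrow cY\sim\Gamma(\kappa,c\theta)$ applied with $c=\gamma^{\mathcal{S}}_k$ then produces the claimed distribution. The no-vector-update variant is immediate: the ZF vectors are always those built from $\mathbf{G}_{\mathcal{S}_0}$, so the null-space dimension is frozen at $M-K-L+1$ and $\gamma^{\mathcal{S}}_k$ is replaced by $\gamma^{\mathcal{S}_0}_k$.

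The main obstacle is justifying these two approximations simultaneously and cleanly. Channel hardening introduces fluctuations of order $1/\sqrt{M-|\mathcal{S}|-L+1}$ around the mean, and the moment-matched Gamma approximation of the hypoexponential sum (essentially a Welch--Satterthwaite fit) is inexact whenever the weights $\{P_p\beta_{lk}\}$ are appreciably spread. Both are standard in the massive-MIMO and interference-analysis literature, but together they mean the statement is really ``$P^{\mathcal{S}}_k$ is well approximated by'' rather than ``$P^{\mathcal{S}}_k$ is exactly'' Gamma; the theorem should therefore be read under the operating assumption $M\gg K$ that the rest of the paper already adopts, and its tightness validated numerically in the simulation section.
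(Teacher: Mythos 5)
Your proposal follows essentially the same route as the paper's proof in Appendix~\ref{app_pj_dist}: write $P^{\mathcal{S}}_k$ as $(2^{R^0_k}-1)X/Y$, replace $Y=|\mathbf{\hat{h}}_k^H\mathbf{v}^{\mathcal{S}}_k|^2\sim\Gamma(M-|\mathcal{S}|-L+1,\beta_k+\sigma^2_\delta)$ by its mean on the grounds that its relative fluctuation is negligible, moment-match $X=\sigma^2_w+I_k+\epsilon_2$ to a Gamma variable to get $\kappa^p_k,\theta^p_k$, and invoke Gamma scaling, with the identical dimension-counting argument for the no-vector-update case. Your additional remarks that the Gamma fit of the hypoexponential sum is an approximation and that the result should be read as an $M\gg K$ approximation are accurate and consistent with how the paper itself treats the statement.
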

\begin{proof}		
	Appendix \ref{app_pj_dist}.	
\end{proof}


\begin{corollary}
	Sum of powers $\sum_{k \in \mathcal{S}} P^{\mathcal{S}}_k$ follows the Gamma distribution: $\sum_{k \in S} P^{\mathcal{S}}_k \sim \Gamma(\kappa_p, \theta_p)$, where
	\begin{align}
		 &\kappa_p = \frac{\left(\sum_{j \in \mathcal{S}} \kappa^p_j \gamma^{\mathcal{S}}_j \theta^p_j\right)^2}{\sum_{j \in \mathcal{S}} \kappa^p_j (\gamma^{\mathcal{S}}_j\theta^p_j)^2 },
		~\theta_p = \frac{\sum_{j \in \mathcal{S}} \kappa^p_j (\gamma^{\mathcal{S}}_j \theta^p_j)^2 }{\sum_{j \in \mathcal{S}} \kappa^p_j \gamma^{\mathcal{S}}_j \theta^p_j},
	\label{eq:k_P_theta_P}
		\end{align}
	\label{cor:sum_pj_dist}
\end{corollary}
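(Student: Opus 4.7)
My plan is to prove the corollary by a moment-matching (Welch–Satterthwaite type) argument, since the summands $P^{\mathcal{S}}_k$ have the same shape parameter family but different scale parameters $\gamma^{\mathcal{S}}_k \theta^p_k$, so the sum is not exactly Gamma and an approximation via matching the first two moments is the natural route. The approach is to compute the mean and variance of $S := \sum_{k \in \mathcal{S}} P^{\mathcal{S}}_k$ directly from Theorem \ref{thm:pj_dist}, then fit a Gamma $\Gamma(\kappa_p, \theta_p)$ so that $\kappa_p \theta_p = \mathbb{E}[S]$ and $\kappa_p \theta_p^2 = \mathrm{Var}[S]$, and finally solve this $2\times 2$ system for $\kappa_p$ and $\theta_p$.

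The concrete steps are as follows. First, I would invoke Theorem \ref{thm:pj_dist} to get $\mathbb{E}[P^{\mathcal{S}}_k] = \kappa^p_k \gamma^{\mathcal{S}}_k \theta^p_k$ and $\mathrm{Var}[P^{\mathcal{S}}_k] = \kappa^p_k (\gamma^{\mathcal{S}}_k \theta^p_k)^2$ from the standard moments of a Gamma distribution. Second, using linearity of expectation I obtain
\begin{equation*}
\mathbb{E}[S] = \sum_{j \in \mathcal{S}} \kappa^p_j \gamma^{\mathcal{S}}_j \theta^p_j,
\end{equation*}
and, invoking (approximate) independence of the $\{P^{\mathcal{S}}_j\}_{j\in\mathcal{S}}$ so that the cross-covariances drop,
\begin{equation*}
\mathrm{Var}[S] = \sum_{j \in \mathcal{S}} \kappa^p_j (\gamma^{\mathcal{S}}_j \theta^p_j)^2.
\end{equation*}
Third, I would set $\kappa_p \theta_p$ equal to $\mathbb{E}[S]$ and $\kappa_p \theta_p^2$ equal to $\mathrm{Var}[S]$; dividing gives $\theta_p = \mathrm{Var}[S]/\mathbb{E}[S]$, and substituting back gives $\kappa_p = \mathbb{E}[S]^2/\mathrm{Var}[S]$. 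These are precisely the expressions (\ref{eq:k_P_theta_P}) claimed in the corollary.

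The main obstacle, and the step where I expect the proof to be most delicate, is the independence assumption used to kill the covariance terms in $\mathrm{Var}[S]$. The random variables $P^{\mathcal{S}}_k$ share the common ZF geometry through $\mathbf{G}_{\mathcal{S}}$, so the effective gains $|\mathbf{\hat{h}}_k^H \mathbf{v}^{\mathcal{S}}_k|^2$ are not strictly independent across $k$; moreover the reverse-interference terms $I_k$ depend on disjoint PT-to-SU channels and are independent, but they are coupled to the gains only through different SU indices. I would argue that in the massive MIMO regime $M \gg |\mathcal{S}|+L$, the projection residuals concentrate, so the dominant randomness comes from the per-SU channels $\mathbf{\hat{h}}_k$ and the per-SU reverse interference $I_k$, which are independent across $k$; this justifies treating the $P^{\mathcal{S}}_k$ as approximately independent for the purpose of variance computation. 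I would also note that the resulting Gamma is a moment-matched approximation rather than an exact distribution, consistent with how $f(\mathcal{S})$ in (\ref{eq:f_S}) is subsequently evaluated.
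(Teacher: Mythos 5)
Your proposal is correct and follows essentially the same route as the paper: the paper's proof is a one-line invocation of Lemma 3 in \cite{hosseini2014}, which is exactly the second-order moment-matching (Welch--Satterthwaite) fit of a Gamma distribution that you carry out explicitly, yielding $\kappa_p = \mathbb{E}[S]^2/\mathrm{Var}[S]$ and $\theta_p = \mathrm{Var}[S]/\mathbb{E}[S]$. The independence issue you flag is already resolved in the paper's proof of Theorem \ref{thm:pj_dist}, where the beamforming gain $|\mathbf{\hat{h}}_k^H \mathbf{v}^{\mathcal{S}}_k|^2$ is replaced by its deterministic mean, leaving the $P^{\mathcal{S}}_k$ as functions of the mutually independent channels $h_{lk}$ only --- which is the same concentration argument you give.
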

\begin{proof}	
	Using Lemma 3 in \cite{hosseini2014}, the sum $\sum_{k \in \mathcal{S}} P^{\mathcal{S}}_k$ is modeled as a Gamma random variable with shape and scale parameters $\kappa_{p}$, and $\theta_{p}$, respectively, as defined in (\ref{eq:k_P_theta_P}). 
\end{proof}


\begin{corollary}
	Consider selection of two sets $\mathcal{S}_1$ and $\mathcal{S}_2$ containing SU-$k$. The power required to achieve rate $R^0_k$ at SU-$k$ with selection of $\mathcal{S}_1$ stochastically dominates the power required to achieve the same rate with the selection of $\mathcal{S}_2$, if $\mathcal{S}_2$ is subset of $\mathcal{S}_1$, i.e., $\Pr(P^{\mathcal{S}_1}_k \geq x)> \Pr(P^{\mathcal{S}_2}_k \geq x)$ for any $x$, if $\mathcal{S}_2 \subset \mathcal{S}_1$.
	\label{cor:dmpvup_P_k}
\end{corollary}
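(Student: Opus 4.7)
The plan is to reduce this to a statement about Gamma distributions by leveraging Theorem \ref{thm:pj_dist}. By that theorem, $P^{\mathcal{S}}_k \sim \Gamma(\kappa^p_k,\gamma^{\mathcal{S}}_k\theta^p_k)$ where $\kappa^p_k$ and $\theta^p_k$ do not depend on $\mathcal{S}$ at all --- they are functions only of $\sigma^2_w$, $\epsilon_2$, $P_p$ and the large-scale coefficients $\beta_{lk}$. The only piece that depends on the selected set is the multiplier $\gamma^{\mathcal{S}}_k = (2^{R^0_k}-1)/\bigl[(\beta_k+\sigma^2_\delta)(M-|\mathcal{S}|-L+1)\bigr]$, which is strictly increasing in $|\mathcal{S}|$.

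First I would note that $\mathcal{S}_2 \subset \mathcal{S}_1$ forces $|\mathcal{S}_2|<|\mathcal{S}_1|$, hence $M-|\mathcal{S}_1|-L+1 < M-|\mathcal{S}_2|-L+1$ (assuming both sets are feasible in the sense that $M-|\mathcal{S}_1|-L+1>0$, which is guaranteed in the massive-MIMO regime $M\gg K$). Therefore $\gamma^{\mathcal{S}_1}_k > \gamma^{\mathcal{S}_2}_k$, so that the two Gamma random variables share a common shape parameter $\kappa^p_k$ but the scale of $P^{\mathcal{S}_1}_k$ strictly exceeds the scale of $P^{\mathcal{S}_2}_k$.

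Next I would invoke the standard fact that a Gamma family with fixed shape is stochastically ordered by scale. Concretely, if $Y \sim \Gamma(\kappa^p_k,\gamma^{\mathcal{S}_2}_k\theta^p_k)$, then by the scaling property of the Gamma law, $(\gamma^{\mathcal{S}_1}_k/\gamma^{\mathcal{S}_2}_k)Y \sim \Gamma(\kappa^p_k,\gamma^{\mathcal{S}_1}_k\theta^p_k)$, which is the law of $P^{\mathcal{S}_1}_k$. Since the multiplicative factor $\gamma^{\mathcal{S}_1}_k/\gamma^{\mathcal{S}_2}_k$ exceeds one, for every $x>0$,
\begin{equation}
\Pr(P^{\mathcal{S}_1}_k \geq x) = \Pr\!\left(Y \geq \tfrac{\gamma^{\mathcal{S}_2}_k}{\gamma^{\mathcal{S}_1}_k}x\right) > \Pr(Y \geq x) = \Pr(P^{\mathcal{S}_2}_k \geq x),
\end{equation}
using that the survival function of the Gamma distribution is strictly decreasing on $(0,\infty)$. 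This establishes the claimed stochastic dominance.

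There is no real obstacle here: the work has already been done in Theorem \ref{thm:pj_dist}, which isolates the $\mathcal{S}$-dependence into the single scale factor $\gamma^{\mathcal{S}}_k$. The only subtlety worth flagging is the implicit assumption $M-|\mathcal{S}_1|-L+1 \geq 1$ so that the distribution of $P^{\mathcal{S}_1}_k$ is well defined (the ZF projection has positive-dimensional range); under the standing regime this is automatic. The case $x\le 0$ is trivial since both probabilities equal one.
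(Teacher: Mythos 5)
Your proposal is correct and follows essentially the same route as the paper: both arguments reduce the claim to the observation that $P^{\mathcal{S}_1}_k$ and $P^{\mathcal{S}_2}_k$ are Gamma variables sharing the shape parameter $\kappa^p_k$ and differing only in scale through $\gamma^{\mathcal{S}}_k$, with $\gamma^{\mathcal{S}_1}_k>\gamma^{\mathcal{S}_2}_k$ because $|\mathcal{S}_1|>|\mathcal{S}_2|$. The paper compares the two CDFs written as incomplete-Gamma integrals whose upper limits are $x/(\gamma^{\mathcal{S}_i}_k\theta^p_k)$, while you invoke the equivalent Gamma scaling property; your added remark on $M-|\mathcal{S}_1|-L+1\geq 1$ is a harmless extra precaution the paper leaves implicit.
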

\begin{proof}
	Appendix \ref{app_cor_dmpvup_P_k}.
\end{proof}

\noindent\textit{Remark}: In DMP, if set $\mathcal{S}_1$ does not satisfy constraints in (\ref{eq:optim5_const1}), a subset of $\mathcal{S}_1$, say $\mathcal{S}_2$, is considered by dropping SU-$j$ that consumes the maximum power. Corollary \ref{cor:dmpvup_P_k} implies that the individual power requirements for all SUs still included in $\mathcal{S}_2$ reduce due to dropping of SU-$j$.

The expression for $\mathbb{E}[K^*_2]$ under DMP without vector update is obtained by evaluating $f(\mathcal{S})$ and $P'({S^+\backslash\{j\}})$ using distributions of $P^{\mathcal{S}_0}_k$ instead of $P^{\mathcal{S}}_k$.


\subsubsection{Average number of SUs exceeding the required rate}
\label{sec:E_K_starstar}
The average number of SUs achieving the minimum rate of $R^0_k$ using DMP can be expressed as follows:
\begin{align}
 \mathbb{E}[K^{**}_1] = \sum \limits_{k=1}^{K} \sum \limits_{\mathcal{S}: k \in \mathcal{S}} f(\mathcal{S}) g(\mathcal{S})  \Pr(R^{\mathcal{S}}_k \geq R^0_k),
\label{eq:E_K_starstar_1}
\end{align}
In order to compute the above expression, we need to compute the complementary cdf of $R^\mathcal{S}_k$ which is obtained by Theorem \ref{thm:R_k}.

\begin{theorem}
	The complementary cdf of the achieved rate at SU-k, if the set $\mathcal{S}$ is selected and the power is allocated by (\ref{eq:P_k_vup}), is given by:	
	\begin{align}
		\nonumber &\Pr(R^{\mathcal{S}}_k \geq y) =\\
		&\frac{1}{2\pi}\int \limits_{-\infty}^{\zeta_y} \int \limits_{-\infty}^{\infty}\left[(1-\theta^z_k jt)^{-\kappa^z_k} \prod \limits_{l \in \mathcal{T}} (1-\theta^z_{lk}jt)\right]~e^{-j2\pi wt}dt~dw,
		\label{eq:Pr_R_k}
		\end{align}
	where 
		\begin{align}
		\nonumber \zeta_y= C_y (\sigma^2_w + \epsilon_2) -\sigma^2_w , ~~
		C_y = \frac{\beta_k}{\beta_k + \sigma^2_\delta} \left(\frac{2^{R^0_k}-1}{2^y -1}\right),
		\end{align}
{		\small
	\begin{align}
	\nonumber \kappa^z_k&= \frac{\left(\sum \limits_{j \in \mathcal{S} \backslash \{k\}} {\gamma^{\mathcal{S}}_j \theta^p_j} {\Gamma(\kappa^p_j+1)}/{ \Gamma(\kappa^p_j)}\right)^2}{\sum \limits_{j \in \mathcal{S} \backslash \{k\}} \left({\gamma^{\mathcal{S}}_j \theta^p_j}/ {\Gamma(\kappa^p_j)}\right)^2 \left({2\Gamma(\kappa^p_j+2)\Gamma(\kappa^p_j) - \Gamma^2(\kappa^p_j+1)}\right)}, 
	\\\nonumber \theta^z_k&= \frac{\sigma^2_{\delta}}{\kappa^z_k} \sum \limits_{j \in \mathcal{S} \backslash \{k\}} \gamma^{\mathcal{S}}_j \theta^p_j \frac{{\Gamma(\kappa^p_j+1)} }{ { \Gamma(\kappa^p_j)}}, \text{and}
	\\\theta^z_{lk}&= (1-C_y)P_p\beta_{lk}
	\label{eq:kz_thetaz}
	\end{align}}
	\label{thm:R_k}
\end{theorem}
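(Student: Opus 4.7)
My plan is to recast $\{R^{\mathcal{S}}_k\geq y\}$ as an inequality $Z_k\leq\zeta_y$ for an independent sum of non-negative random variables whose characteristic function is a product of Gamma factors, and then invert the Fourier transform to recover the complementary cdf. First I would substitute the QoS-aware allocation (\ref{eq:P_k_vup}) into the SINR in (\ref{eq:R_k_S}), use $R^{\mathcal{S}}_k\geq y\iff\mathrm{SINR}_k\geq 2^y-1$, and collect the terms involving $I_k$ and $\sum_{j\neq k}I_{jk}$. The resulting inequality carries the random ratio $|\mathbf{h}^H_k\mathbf{v}^{\mathcal{S}}_k|^2/|\hat{\mathbf{h}}^H_k\mathbf{v}^{\mathcal{S}}_k|^2$; using the MMSE-type decomposition $\mathbf{h}_k\mid\hat{\mathbf{h}}_k\sim\mathcal{CN}(\alpha\hat{\mathbf{h}}_k,\alpha\sigma^2_\delta\mathbf{I})$ with $\alpha=\beta_k/(\beta_k+\sigma^2_\delta)$, together with a channel-hardening argument in the large-$M$ regime, I would replace this random ratio by the deterministic attenuation $\alpha$. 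That is exactly what produces the constant $C_y$ in the statement, and algebraic rearrangement delivers $(1-C_y)I_k+\sum_{j\in\mathcal{S}\backslash\{k\}}I_{jk}\leq C_y(\sigma^2_w+\epsilon_2)-\sigma^2_w\equiv\zeta_y$, identifying $Z_k$ and $\zeta_y$.

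Next I would determine the distribution of each piece of $Z_k$. The reverse interference $I_k=\sum_{l\in\mathcal{T}}P_p|h_{lk}|^2$ is an independent sum of exponentials with means $P_p\beta_{lk}$, so $(1-C_y)I_k$ has characteristic function $\prod_{l\in\mathcal{T}}(1-\theta^z_{lk}jt)^{-1}$ with $\theta^z_{lk}=(1-C_y)P_p\beta_{lk}$, furnishing the product factor in (\ref{eq:Pr_R_k}) (the exponent $-1$ being the natural one for a product of Gamma characteristic functions). For the inter-SU leakage $I_{jk}=P^{\mathcal{S}}_j|\mathbf{h}^H_k\mathbf{v}^{\mathcal{S}}_j|^2$, the ZF orthogonality $\hat{\mathbf{h}}^H_k\mathbf{v}^{\mathcal{S}}_j=0$ for $j\neq k$ gives $\mathbf{h}^H_k\mathbf{v}^{\mathcal{S}}_j=-\delta^H_k\mathbf{v}^{\mathcal{S}}_j$, so $|\mathbf{h}^H_k\mathbf{v}^{\mathcal{S}}_j|^2$ is exponential with mean $\sigma^2_\delta$ and is independent of $P^{\mathcal{S}}_j\sim\Gamma(\kappa^p_j,\gamma^{\mathcal{S}}_j\theta^p_j)$ from Theorem \ref{thm:pj_dist}. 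Using the Gamma moment identity $\mathbb{E}[X^n]=\theta^n\Gamma(\kappa+n)/\Gamma(\kappa)$ I would compute $\mathbb{E}[I_{jk}]$ and $\mathbb{E}[I_{jk}^2]$, sum over $j$, and apply the Welch--Satterthwaite-style moment-matching already used in Corollary \ref{cor:sum_pj_dist} to approximate $\sum_{j\neq k}I_{jk}$ by a single $\Gamma(\kappa^z_k,\theta^z_k)$ with the parameters in (\ref{eq:kz_thetaz}). Since the two components of $Z_k$ depend on disjoint sets of random variables ($\{h_{lk}\}_{l\in\mathcal{T}}$ versus $\{\delta_k,P^{\mathcal{S}}_j,\mathbf{v}^{\mathcal{S}}_j\}$), they are independent, so the characteristic function of $Z_k$ factorizes into the product that sits inside the inner integral of (\ref{eq:Pr_R_k}). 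Standard Fourier inversion of this product, followed by integration of the resulting density over $(-\infty,\zeta_y]$, yields the claimed double-integral representation of $\Pr(R^{\mathcal{S}}_k\geq y)=\Pr(Z_k\leq\zeta_y)$.

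The main obstacle is justifying the first-step replacement $|\mathbf{h}^H_k\mathbf{v}^{\mathcal{S}}_k|^2/|\hat{\mathbf{h}}^H_k\mathbf{v}^{\mathcal{S}}_k|^2\mapsto\alpha$, since this asymptotic hardening is only exact when $M\gg|\mathcal{S}|+L$; the same approximation underlies Theorem \ref{thm:pj_dist}, so it is thematically consistent but deserves careful bookkeeping of the error terms. A secondary subtlety is that each $I_{jk}$ is a product of two independent Gamma variables and thus not itself Gamma-distributed, meaning the single-Gamma representation of $\sum_{j\neq k}I_{jk}$ is a two-moment closure rather than an exact identity; this should be flagged as the second approximation in the derivation, even though the same device was accepted without comment in Corollary \ref{cor:sum_pj_dist}.
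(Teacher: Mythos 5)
Your proposal is correct and follows essentially the same route as the paper's Appendix C: substitute the QoS-aware power allocation into the rate, replace the ratio $|\mathbf{h}_k^H\mathbf{v}_k^{\mathcal{S}}|^2/|\hat{\mathbf{h}}_k^H\mathbf{v}_k^{\mathcal{S}}|^2$ by its mean to obtain $C_y$ and $\zeta_y$, treat $(1-C_y)I_k$ as a sum of exponentials and $\sum_{j\neq k}I_{jk}$ as a two-moment-matched Gamma random variable via the ZF identity $\hat{\mathbf{h}}_k^H\mathbf{v}_j^{\mathcal{S}}=0$, and then invert the product of characteristic functions to get the double integral. The two approximations you flag (mean-value replacement of the beamforming gains and the single-Gamma closure of the product-of-Gammas sum) are exactly the ones the paper makes, and your remark that each factor $(1-\theta^z_{lk}jt)$ should carry exponent $-1$ is in fact a correction of a typo in the stated formula.
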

\begin{proof}	
	Appendix \ref{app_R_k}.
\end{proof}	
\noindent\textit{Remark}: If $\sigma^2_\delta= 0$, then we get $\Pr(R^\mathcal{S}_k \geq R^0_k)=1$. The proof is provided in Appendix \ref{app_R_k}.

The expression $\mathbb{E}[K^{**}_2]$ under DMP without vector update is obtained using the same expression as in the RHS of (\ref{eq:E_K_starstar_1}) by replacing  $\gamma^{\mathcal{S}}_j$ with $\gamma^{\mathcal{S}_0}_j$ in Theorems \ref{thm:pj_dist} and \ref{thm:R_k}.

\subsubsection{Average interference at PR-$l$}
\label{sec:I_l}
The expected value of the interference is computed as follows:
\begin{align}
\mathbb{E}[I_l] = \sum_{\mathcal{S}} g(\mathcal{S}) f(\mathcal{S})\sum_{k \in \mathcal{S}} \mathbb{E}\left[I_{kl}\right],~~ l \in \mathcal{R},
\label{eq:E_I_l}
\end{align}
where $\mathbb{E}\left[I_{kl}\right] = \mathbb{E}\left[P^{\mathcal{S}}_k |\mathbf{h}^H_{l0}\mathbf{v}^{\mathcal{S}}_k|^2\right] = \mathbb{E}\left[P^{\mathcal{S}}_k |\mathbf{\Delta}^H_{l0}\mathbf{v}^{\mathcal{S}}_k|^2\right]$. The second equality follows from $\mathbf{\hat{h}}^H_{l0} \mathbf{v}^{\mathcal{S}}_k =0$ due to zero forcing beamforming. The expression for $\mathbb{E}\left[I_{kl}\right]$ can be written using the distributions of $P^{\mathcal{S}}_k$ and $\mathbf{\Delta}_{l0}$ as follows:
\begin{align}
\mathbb{E}\left[I_{kl}\right] =\mathbb{E}\left[P^{\mathcal{S}}_k |\mathbf{\Delta}^H_{l0}\mathbf{v}^{\mathcal{S}}_k|^2\right] = \gamma^{\mathcal{S}}_k \theta^p_k \sigma^2_{\Delta} \frac{\Gamma(\kappa^p_k+1)}{\Gamma(\kappa^p_k)}.
\label{eq:I_kl_1}
\end{align}
The proof is shown in Appendix \ref{app_I_kl}. Similarly, the expression for $\mathbb{E}[I_l]$ in DMP without vector update is obtained by substituting the following in (\ref{eq:E_I_l}):
\begin{align}
\mathbb{E}\left[I_{kl}\right] = \gamma^{\mathcal{S}_0}_k \theta^p_k \sigma^2_{\Delta} \frac{\Gamma(\kappa^p_k+1)}{\Gamma(\kappa^p_k)}.
\label{eq:I_kl_2}
\end{align}
From (\ref{eq:E_I_l}), (\ref{eq:I_kl_1}), and (\ref{eq:I_kl_2}), we can see that the average interference to PRs is 0 for $\sigma^2_\Delta = 0$.

\subsection{{Optimality of DMP}}
\label{sec:optimality}
As described in Section \ref{sec:solution}, the power allocated to SU-$k$, $P^{\mathcal{S}_0}_k$, remains constant during the algorithmic iterations of DMP without vector update. Therefore, DMP without vector update effectively obtains the solution for the following problem:
\begin{align}
(\mathbf{P2})~~\max_{\{s_k\}} &\sum_{k=1}^{K} s_k
	\label{eq:optim6_start}
	\\\text{Subject to}&:\sum_{k=1}^{K} s_k P^{\mathcal{S}_0}_k  \leq \min \left( {I^0/\epsilon_1}, {P^0} \right),
	\label{eq:optim6_const1}
	\\s_k&=1,s_j=0, k \in \mathcal{S},  j \in \mathcal{S}_0 \backslash \mathcal{S}.
	\label{eq:optim6_end}
\end{align}
The solution obtained by DMP without vector update can be written as follows:
\begin{align}
	\mathcal{S}^*_2 = \arg \max_{\mathcal{S}: \sum_{k\in \mathcal{S}} P^{\mathcal{S}_0}_k  \leq \min \left( {I^0/\epsilon_1}, {P^0} \right)} |\mathcal{S}|.
\end{align}
The solution $\mathcal{S}^*_2$ is the optimal solution for $\mathbf{P2}$, since no set of higher cardinality can satisfy the constraint (\ref{eq:optim6_const1}) for fixed $P^{\mathcal{S}_0}_k$. This is because the proposed algorithm drops the SU with the highest power allocation in each iteration until the constraint (\ref{eq:optim6_const1}) is satisfied and addition of any SU to the set $\mathcal{S}^*_2$ will violate the constraint. 

Further, the optimal solutions of problems $\mathbf{P1}$ and $\mathbf{P2}$ differ due to the difference in power allocations $P^{\mathcal{S}^*}_k$ and $P^{\mathcal{S}_0}_k$, where $\mathcal{S}^*$ indicates the optimal solution set for $\mathbf{P1}$. The power allocations differ due to the difference in the number of nulls in the ZF vectors $\mathbf{v}^{\mathcal{S}^*}_k$ and $\mathbf{v}^{\mathcal{S}_0}_k$, which are denoted by $M-|\mathcal{S}^*|-L+1$ and $M-K-L+1$, respectively. The difference in the power allocations $P^{\mathcal{S}^*}_k$ and $P^{\mathcal{S}_0}_k$ becomes negligible if $M \gg K+L$. From Theorem \ref{thm:pj_dist}, we can also observe that the distribution of $P^{\mathcal{S}^*}_k$ approaches that of $P^{\mathcal{S}_0}_k$ as  $\gamma^{\mathcal{S}^*}_k \rightarrow \gamma^{\mathcal{S}_0}_k$, which occurs if $M \gg K+L$. Therefore, we can conclude that the problem $\mathbf{P1}$ becomes equivalent to $\mathbf{P2}$ for $M \gg K+L$ and  $|\mathcal{S}^*_2|$ approaches the optimal solution $|\mathcal{S}^*|$.

Finally, the number of SUs selected by the DMP algorithm is no less than the number of SUs selected by DMP without vector update, i.e, $|\mathcal{S}_2^*|\leq |\mathcal{S}_1^*|\leq |\mathcal{S}^*|$. This is because the power allocated to each SU in DMP is less that or equal to that in DMP without vector update, i.e., $P^{\mathcal{S}}_k \leq P^{\mathcal{S}_0}_k$ when a set $\mathcal{S}$ is selected. This can also be observed from Corollary \ref{cor:dmpvup_P_k}. The condition for selection of a set $\mathcal{S}$ under DMP {\small$\left(\sum \limits_{k \in \mathcal{S}} P^{\mathcal{S}}_k \leq \min \left(\frac{I^0}{\epsilon_1}, P^0\right)\right)$}  is always satisfied if the condition under DMP without vector update is satisfied {\small$\left(\sum \limits_{k \in \mathcal{S}} P^{\mathcal{S}_0}_k \leq \min \left(\frac{I^0}{\epsilon_1}, P^0\right)\right)$}, while the converse is not true. Therefore, we get $|\mathcal{S}_2^*|\leq |\mathcal{S}_1^*|\leq |\mathcal{S}^*|$. This phenomenon can be intuitively explained as follows. When the set $\mathcal{S}_0$ is selected initially, the number of degrees of freedom in the beamforming, after adding $L+K-1$ nulls, is $M-K-L+1$. The number of degrees of freedom increments if the ZF vectors are updated after dropping an SU. Therefore, the power requirements of SUs which are not dropped reduce as the ZF vectors $\mathbf{v}_k$ are better aligned with channels $\mathbf{\hat{h}}_k$. The reduction in power requirements implies that more SUs can be kept in the downlink, while satisfying the constraints of the problem $\mathbf{P1}$. Therefore, the number of SUs selected by the DMP is no less than the number of SUs selected by DMP without vector update.

\subsection{Selection of algorithm parameters}
\label{sec:selection_of_parameters}
The optimization framework $\mathbf{P1}$ in (\ref{eq:optim4_start})-(\ref{eq:optim4_end}) involves various parameters. In this section, we provide discussion on the selection of  these parameters. The parameters can be broadly classified into two categories: 1) network dependent fixed parameters: $P^0, I^0, R^0_k$, and 2) proposed margin parameters: $\epsilon_1, \epsilon_2$. 

\subsubsection{Network dependent parameters}
\label{sec: network_dependent_parameters}
The network dependent parameters are decided by the operators of secondary and primary networks. Consider, for example, that primary and secondary networks coexist in 3.5GHz band as CBRS users where primary system is Priority Access License (PAL) user and secondary system is General Access Authorization (GAA) user \cite{fcc_15_47A1, ye2016b}. The value of $P^0$ will be determined using the power amplifier used at the SBS. Typical value of $P^0=40$dBm is used for BS under sub-6GHz bands. The rate constraints $R^0_k$ are determined by the operator of the secondary network depending on the QoS requirements for the SUs. The interference constraint $I^0$ is determined by the operator of primary network. The value of $I^0$ determines the SINR degradation of PUs due to the coexisting SUs. For example, if SINR degradation of $<1$dB is desired then $I^0$ should be set such that $I^0/\sigma^2_w< -6$dB, where $\sigma^2_w$ is the noise power at the PU.

\subsubsection{Proposed margin parameters}
\label{sec: proposed_margin_parameters}
Once the values of $P^0, I^0$ and $R^0_k$ are fixed, the algorithm specific margin parameters $\epsilon_1 $ and $\epsilon_2$ are set as described next. Margin parameters $\epsilon_1$ and $\epsilon_2$ are used in the proposed optimization framework in order to compensate for interference to PRs and inter-SU interference, respectively, resulting form imperfect CSI estimates. 

In order to select appropriate value of $\epsilon_1$, let us consider average value of the true interference $I_l$ at PR-$l$ for given channel estimates $\mathbf{\hat{h}}_{l0}$ and a selected set $\mathcal{S}$:
\begin{align}
\nonumber \mathbb{E}[I_l | \mathbf{\hat{h}}_{l0}, \mathcal{S}] &= \sum_{k \in \mathcal{S}} P^\mathcal{S}_k \mathbb{E}[|\mathbf{h}^H_{l0} \mathbf{v}^{\mathcal{S}}_k|^2  | \mathbf{\hat{h}}_{l0}, \mathcal{S}], l \in \mathcal{R}, k \in \mathcal{S},\\
 &= \sum_{k \in \mathcal{S}} P^\mathcal{S}_k \sigma^2_\Delta.
\end{align}
The last equality in the above equation follows from the fact that beamforming vectors $\mathbf{v}^\mathcal{S}_k$ are unit vectors that are in nullspace of estimated channels $\mathbf{\hat{h}}^H_{l0}$. Further, the proposed optimization problem ensures that $\sum_{k \in S} P^\mathcal{S}_k \epsilon_1 \leq I^0$ due to constraint in (\ref{eq:optim3_const1}). Therefore, the average interference $\mathbb{E}[I_l | \mathbf{\hat{h}}_{l0}, \mathcal{S}]$ is below the threshold $I^0$ if $\epsilon_1 \geq \sigma^2_\Delta$.

Now, let us consider the selection of $\epsilon_2$. From (\ref{eq:R_k_S}) and (\ref{eq:R_k_S_est}), we can see that the variable $\epsilon_2$ serves as a placeholder for inter-SU interference $\sum_{j \in \mathcal{S}, j\neq k}I_{jk}$. The variable $\epsilon_2$ ensures that higher power is allocated to SU-$k$ to compensate for the inter-SU interference as seen from (\ref{eq:P_k_vup}). Therefore, the value of $\epsilon_2$ should be selected such that $\epsilon_2 \geq \sum_{j, j\neq k}I_{jk}$. However, the instantaneous value of inter-SU interference is unknown. We propose to set the parameter value such that it exceeds the expected value of inter-SU interference for given channel estimates $\mathbf{\hat{h}}_{j}, \mathbf{\hat{h}}_{k}$ and a selected set $\mathcal{S}$:
\begin{align}
\nonumber \epsilon_2 &\geq \mathbb{E}\left[\sum_{j \in \mathcal{S}, j\neq k}I_{jk} \bigg | \mathbf{\hat{h}}_{j},\mathbf{\hat{h}}_{k}, \mathcal{S}\right],\\
\nonumber & = \mathbb{E}\left[\sum_{j \in {\mathcal{S}}, j\neq k}P^{\mathcal{S}}_j |\mathbf{{h}}^H_k \mathbf{v}^{\mathcal{S}}_j|^2  \bigg | \mathbf{\hat{h}}_{j},\mathbf{\hat{h}}_{k}, \mathcal{S}\right], k,j \in{\mathcal{S}}, k\neq j,\\
& \stackrel{(a)}{=} \sum_{j \in {\mathcal{S}}, j\neq k}P^{\mathcal{S}}_j \mathbb{E}[|\delta_k^H \mathbf{v}^{\mathcal{S}}_j |^2],\\
& \stackrel{(b)}{=} \sum_{j \in {\mathcal{S}}, j\neq k}P^{\mathcal{S}}_j \sigma^2_\delta.
\label{eq:epsilon_2_final}
\end{align}
The equality $(a)$ results due to the fact that $\mathbf{\hat{h}}_k \mathbf{v}^{\mathcal{S}}_j = 0$. The equality $(b)$ results from $\delta_k \sim CN(0,\sigma^2_\delta \mathbf{I})$ and $||\mathbf{v}^{\mathcal{S}}_k||=1$. Thus, $\epsilon_2$ should be greater than $\sum_{j \in {\mathcal{S}}, j\neq k}P^{\mathcal{S}}_j \sigma^2_\delta$. However, the power allocations $P^{\mathcal{S}}_j$ are not known in advance. Therefore, we set $\epsilon_2\geq P^0 \sigma^2_\delta$, which ensures that the condition in (\ref{eq:epsilon_2_final}) is satisfied since $P^0 \sigma^2_\delta \geq \sum_{j \in {\mathcal{S}}, j\neq k}P^{\mathcal{S}}_j \sigma^2_\delta$.

\subsubsection{Optimum margins $\epsilon_1, \epsilon_2$}
\label{sec: optimum_margins}
We observe that larger $\epsilon_1$ in (\ref{eq:optim5_const1}) results in admitting fewer SUs in the downlink. Similarly, larger $\epsilon_2$ results in larger power allocation according to (\ref{eq:P_k_vup}), further resulting in dropping of SUs due to the constraint $\sum_{k\in \mathcal{S}} P_k \leq \min(I^0/\epsilon_1, P^0)$ in (\ref{eq:optim5_const1}). In order to satisfy the rate and interference constraints while admitting maximum number of SUs in the downlink, it is necessary to set $\epsilon_1$ and $\epsilon_2$ to the smallest possible values. Therefore, the setting $\epsilon_1=\sigma^2_\Delta$ and $\epsilon_2 = P^0 \sigma^2_\delta$ results in serving maximum number of SUs with given interference and rate constraints.

\subsection{{Complexity Analysis}}
\label{sec:complexity}
Computational complexity of DMP as well as MDML is dominated by the computation of ZF vectors. For a set $\mathcal{S}$, the complexity of obtaining the ZF vectors is $\mathcal{O}\left(M (|\mathcal{S}| +L)^3\right)$ \cite{huang2012}. Since ZF vectors are updated in each iteration of DMP and MDML until a feasible set is reached, the worst case complexity of both algorithms is $\mathcal{O}\left(M K (K+L)^3 \right)$, while the worst case complexity of DMP without vector update is $\mathcal{O}\left(M (K+L)^3\right)$. 

As shown in the previous section, the solution $|\mathcal{S}^*_2|$ obtained by DMP without vector update approaches the optimal value with large $M$. Therefore, we can conclude that near-optimal number of SUs can be selected by the proposed algorithm while reducing the computational complexity by a factor of $K$ as compared to MDML.

\section{Simulation Results}
\label{sec:results}
In the results shown below, the noise power $\sigma^2_w$ is assumed to be $-100$dBm, the transmitted power from primary transmitters is $P_p=20$dBm, the transmit power limit is $P^0=40$dBm. The variance of error is modeled assuming reciprocal channels in a time-division duplex system as $\sigma^2_\Delta = \sigma^2_w/P_p$ and $\sigma^2_\delta = \sigma^2_w/P^0$ \cite{aquilina2015,razavi2014,maurer2011}. 
We consider uniformly distributed SUs and primary transmitters and receivers in a circular cell of radius $2$km with the SBS at the center. The minimum distance between the SBS and SUs is $100$m \cite{ngo2013, ngo2013a}. For each realization of locations, the slow fading coefficients between two nodes are computed as $\beta = \rho d^{-3.8}$, where $d$ is the distance between the two nodes and $\rho$ is a log-normal shadowing variable with standard deviation $\sigma_{s} = 8$dB. The margin parameters are set as $\epsilon_1 =  \sigma^2_\Delta$ and $\epsilon_2 = P^0 \sigma^2_\delta$. We simulate the algorithms for 1000 realizations of the channel for each realization of locations. Analytical and simulation results are averaged over 1000 realizations of the locations.

\begin{figure}
	\centering
	\includegraphics[width=\columnwidth]{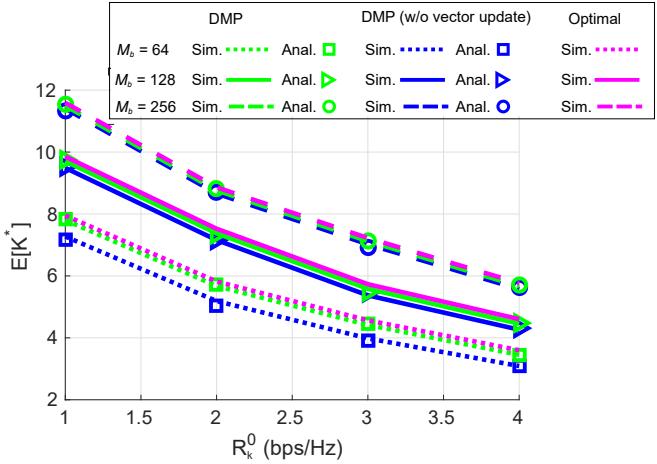}
	\caption{\small Comparison of number of SUs selected by DMP and optimal selection. $L=4$, $K=20$, $I^0 = -106$ dBm.}
	\label{fig:opt}
\end{figure}

\textit{{Comparison with optimal solution:}}
The comparison of the average number of SUs selected by DMP and optimal selection is shown in Fig. \ref{fig:opt}. The optimal solution is obtained by considering all possible sets of cardinalities $K, K-1,K-2,...,K^*$ one-by-one in decreasing order of cardinality, computing ZF vectors and power allocations by (\ref{eq:P_k_vup}), until the constraints in (\ref{eq:optim5_const1}) are satisfied. We observe that the number of SUs selected by DMP is very similar to that by optimal selection and the difference between $E[|\mathcal{S}^*_1|] \approx E[|\mathcal{S}^*|]$. As the number of antennas increased from 64 to 256, the difference between the performance of DMP without vector update and optimal selection becomes negligible as explained in Section \ref{sec:optimality}.

\textit{Impact of $R^0_k$:} The proposed DMP algorithm is designed to satisfy the minimum rate required by the SUs unlike MDML which does not take into account the rate requirements. Therefore, it can be observed that the DMP serves more SUs exceeding the minimum required rate than MDML when $R^0_k$ is uniformly distributed in the range $(0,4]$ as seen in Fig. \ref{fig:diff_R0}. The performance of the DMP and MDML becomes similar as the rate requirements increase to $4$bps/Hz. Further, it can be observed that the performance curves of the DMP and the MDML converge at a higher rate for large number of antennas. This indicates that the performance gain obtained by the DMP over MDML increases for a given rate requirement as the number of antennas increase.


\begin{figure}[t!]
	\centering	
	\begin{subfigure}[b]{\columnwidth}
		\includegraphics[width=\columnwidth]{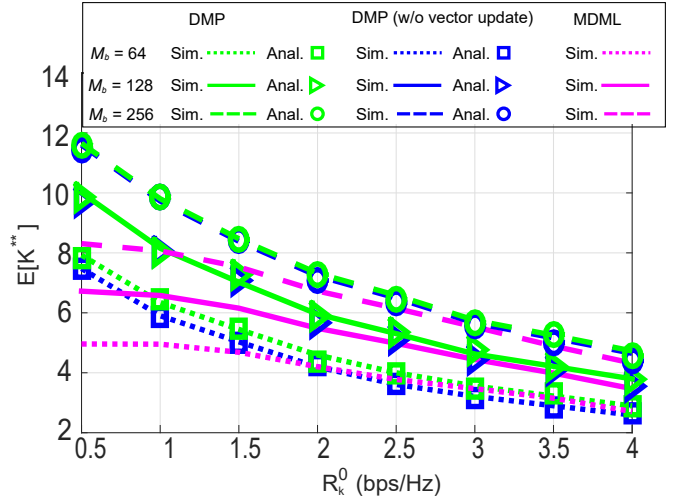}
	\end{subfigure}		
	\caption{\small Impact of rate constraints. $L=4$, $K=20$, $I^0 = -106$ dBm.}
	\label{fig:diff_R0}
\end{figure}

\textit{Impact of $I^0$:} The interference threshold $I^0$ limits the total transmitted power below $I^0/\epsilon_1$, thereby limiting the number of SUs served in both DMP and MDML. It should be noted that the interference of $-100$, $-106$ and $-110$ dBm results in SINR loss of $3, 0.97$ and $0.41$dB, respectively at primary receivers. As shown in  Fig. \ref{fig:diff_I0}, the number of SUs served by the three algorithms increases by 1.5 times with increased interference threshold from $-110$dBm to $-100$dBm at the cost of reduced { signal-to-interference-plus-noise ratio (SINR)} at primary receivers. It can be observed that the performance gain obtained by the DMP over MDML is consistent for different interference thresholds. 

\begin{figure}[t]
	\centering	
	\begin{subfigure}[b]{\columnwidth}
		\includegraphics[width=1\columnwidth]{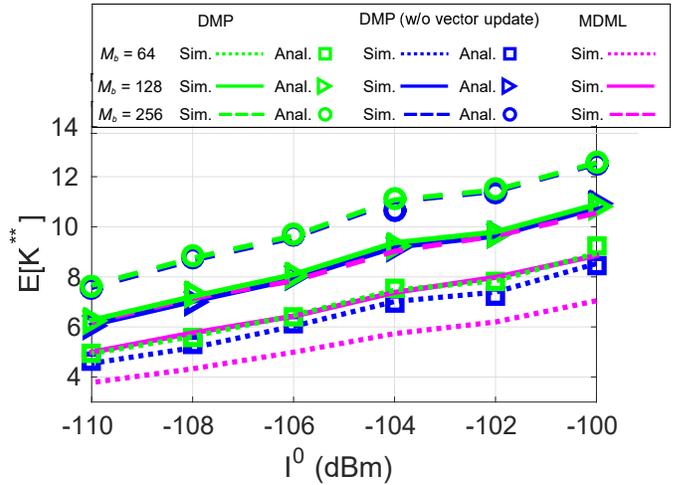}
	\end{subfigure}		
	\caption{\small Impact of interference constraints. $L=4$, $K=20$. $R^0_k=1$ bps/Hz.}
	\label{fig:diff_I0}
		\vspace{-4mm}
\end{figure}

\textit{Impact of number of primary tx-rx pairs $L$:} Increased number of PTs in the network increases the reverse interference $I_k$ to SUs. This results in increased power requirement $P_k$ for the SU-$k$ to satisfy the rate according to (\ref{eq:P_k_vup}). This increased power requirement in turn results in dropping of more SUs in step 6 of the DMP algorithm. Therefore, the number of SUs served by the proposed algorithm reduces with higher $L$ as shown in Fig. \ref{fig:diff_Lp}.

\begin{figure}[t!]
	\centering	
	\begin{subfigure}[b]{\columnwidth}
		\includegraphics[width=1\columnwidth]{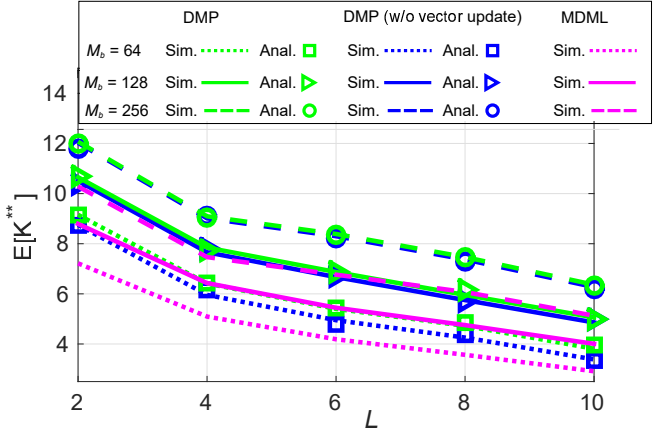}
		\label{fig:diff_Lp_E_k_starstar}
	\end{subfigure}		
	\caption{\small Impact number of primary pairs. $K=20$,$I^0 = -106$ dBm. $R^0_k=1$ bps/Hz.}
	\label{fig:diff_Lp}
\end{figure}

\textit{Impact of total number of SUs:}
The impact of increasing number SUs is shown Fig. \ref{fig:diff_K_unif_R0}. The number of SUs exceeding the required rate increases almost linearly for $M=128$ and $M=256$ under DMP when the rate requirements $R^0_k$ are uniformly distributed in the range $ (0,4]$ bps/Hz. 
We also observe that the difference in the performance of the DMP with and without vector update reduces with increased number of antennas. This is due to the fact that the ratio $\gamma^{\mathcal{S}_0}/\gamma^{\mathcal{S}^*_1}$ is close to one which results in similar power allocations for SUs with and without vector update, thereby resulting in similar number of SUs being dropped in the step \ref{step:drop} of the DMP algorithm.

\begin{figure}[t!]
	\centering	
	\begin{subfigure}[b]{\columnwidth}
		\includegraphics[width=1\columnwidth]{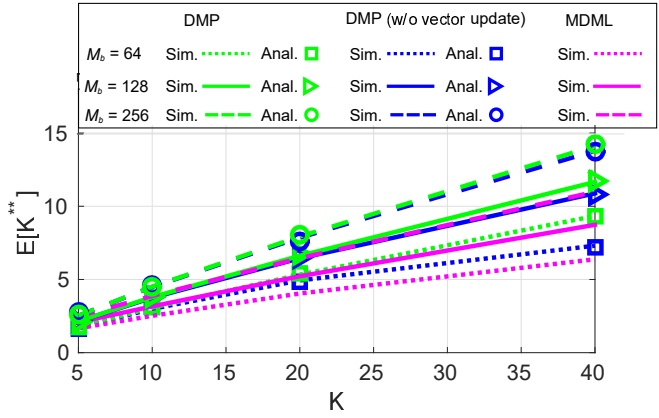}
		\label{fig:diff_K_unif_R0_E_k_starstar}
	\end{subfigure}		
	\caption{\small Impact of total number of SUs $K$ in the network. $L=4$, $I^0 = -106$ dBm.  $R^0_k$ is uniformly distributed in (0,4] bps/Hz.}
	\label{fig:diff_K_unif_R0}
\end{figure}

\textit{Optimality of margins:} The margin parameters $\epsilon_1$ and $\epsilon_2$ are used to protect the PRs from the interference under imperfect CSI. In order to study the impact of margins on the performance, we plot the average interference and the number of SUs served for different values of $\epsilon_1$ and $\epsilon_2$. As shown in Fig. \ref{fig:diff_eta_ep_unif_R0_avg_int}, the average interference remains below the threshold for $\epsilon_1>\sigma^2_\Delta$. This result holds for values of $\epsilon_2/ P^0 \sigma^2_\delta$ in range [0,4], because the variable $\epsilon_2$ does not significantly affect the average interference. As mentioned in Section \ref{sec:selection_of_parameters}, larger values of $\epsilon_1$ result in smaller the number of SUs served. Therefore, we keep $\epsilon_1=\sigma^2_\Delta = \sigma^2_w/ P_p$ and plot $E[K^{**}]$ for range of values of $\epsilon_2$, as shown in Fig. \ref{fig:diff_eta_ep_unif_R0_E_k_starstar}. For $\epsilon_2< P^0 \sigma^2_\delta$, fewer SUs receive the required rate due to inter-SU interference, while for $\epsilon_2 > P^0 \sigma^2_\delta$ fewer SUs are admitted in the downlink due to large power allocation. Therefore, we see that the maximum number of SUs are served for $\epsilon_2=P^0 \sigma^2_\delta$ as described in Section \ref{sec: optimum_margins}.


\section{Conclusion}
\label{sec:Conclusion}
In this paper, we proposed an optimization framework in order to serve the maximum number of SUs in an underlay CR network consisting of a secondary BS equipped with a large number of antennas. The proposed framework uses margin parameters to limit the interference to PUs below a specified threshold under imperfect knowledge of CSI. A new user selection and power allocation algorithm, referred to as DMP, is proposed that is based on ZF beamforming and power allocation that satisfies specific rate requirements of selected SUs. Theoretical analysis is presented to compute the number of SUs selected and the interference caused at PUs by the proposed algorithm. Results show that the proposed DMP algorithm serves more SUs than modified DML algorithm for lower rate requirements. As the rate requirements for the SUs increase, the performance of the modified DML algorithm approaches that of DMP. A low complexity version of DMP without ZF vector update is also studied. This algorithm reduces the complexity by a factor of the number of SUs and provides similar performance as DMP with vector update when the number of SBS antennas is an order of magnitude larger than the number of SUs. The analysis and simulation results show that the number of SUs selected by the proposed algorithm approaches the optimal solution if the number of SBS antennas is an order of magnitude larger than the number of SUs and PUs in the network. 

\begin{figure}[t!]
	\centering	
	\begin{subfigure}[b]{0.9\columnwidth}
		\includegraphics[width=\columnwidth]{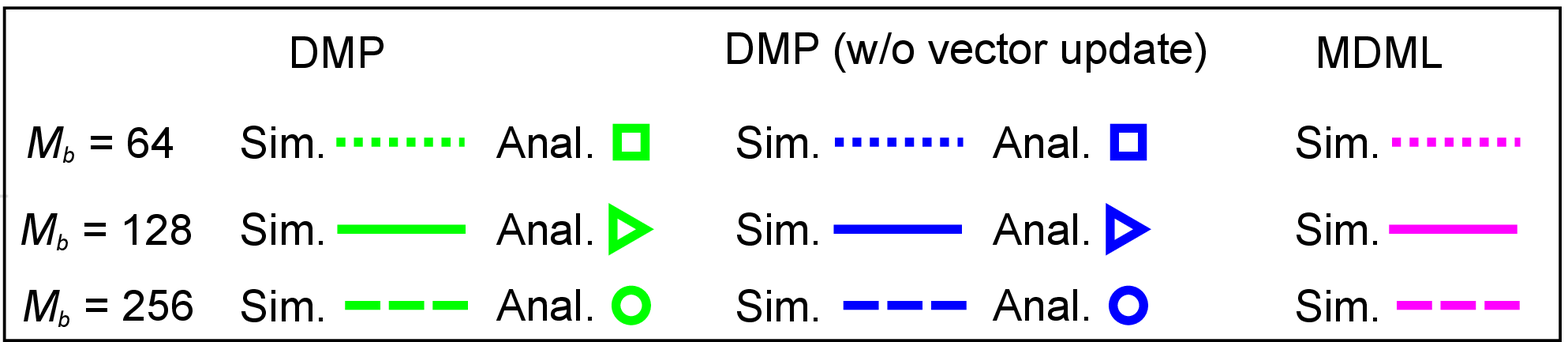}
		\label{fig:legend_diff_eta_ep_unif_R0}
		\vspace{-4mm}		
	\end{subfigure}			
	\\
	\begin{subfigure}[b]{\columnwidth}
		\includegraphics[width=\columnwidth]{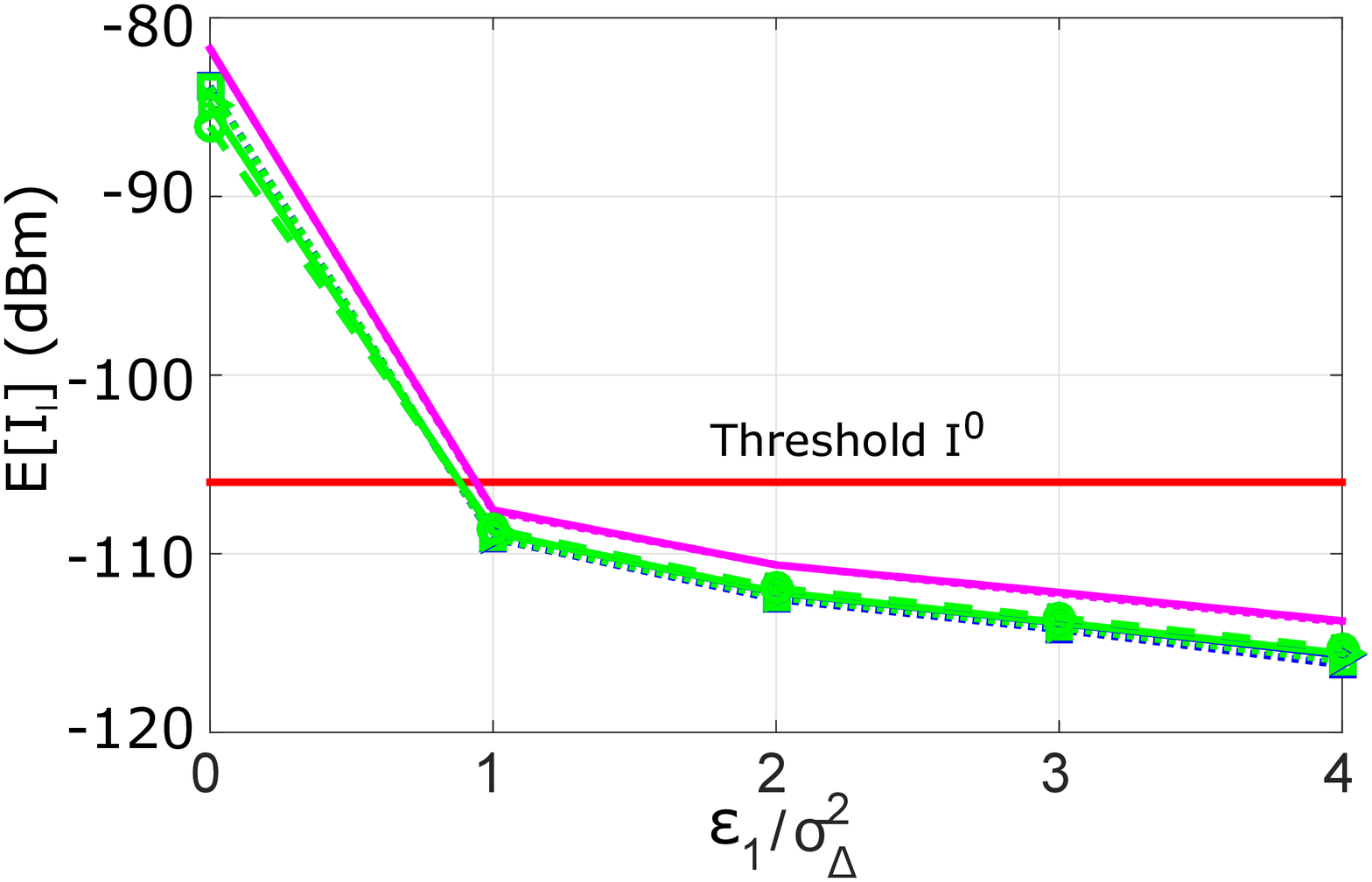}
		\caption{\small Avg. interference to primary receivers. Results hold for $\frac{\epsilon_2}{P^0 \sigma^2_\delta} \in  [0,4]$.}
		\label{fig:diff_eta_ep_unif_R0_avg_int}
	\end{subfigure}	
	\begin{subfigure}[b]{\columnwidth}
		\includegraphics[width=\columnwidth]{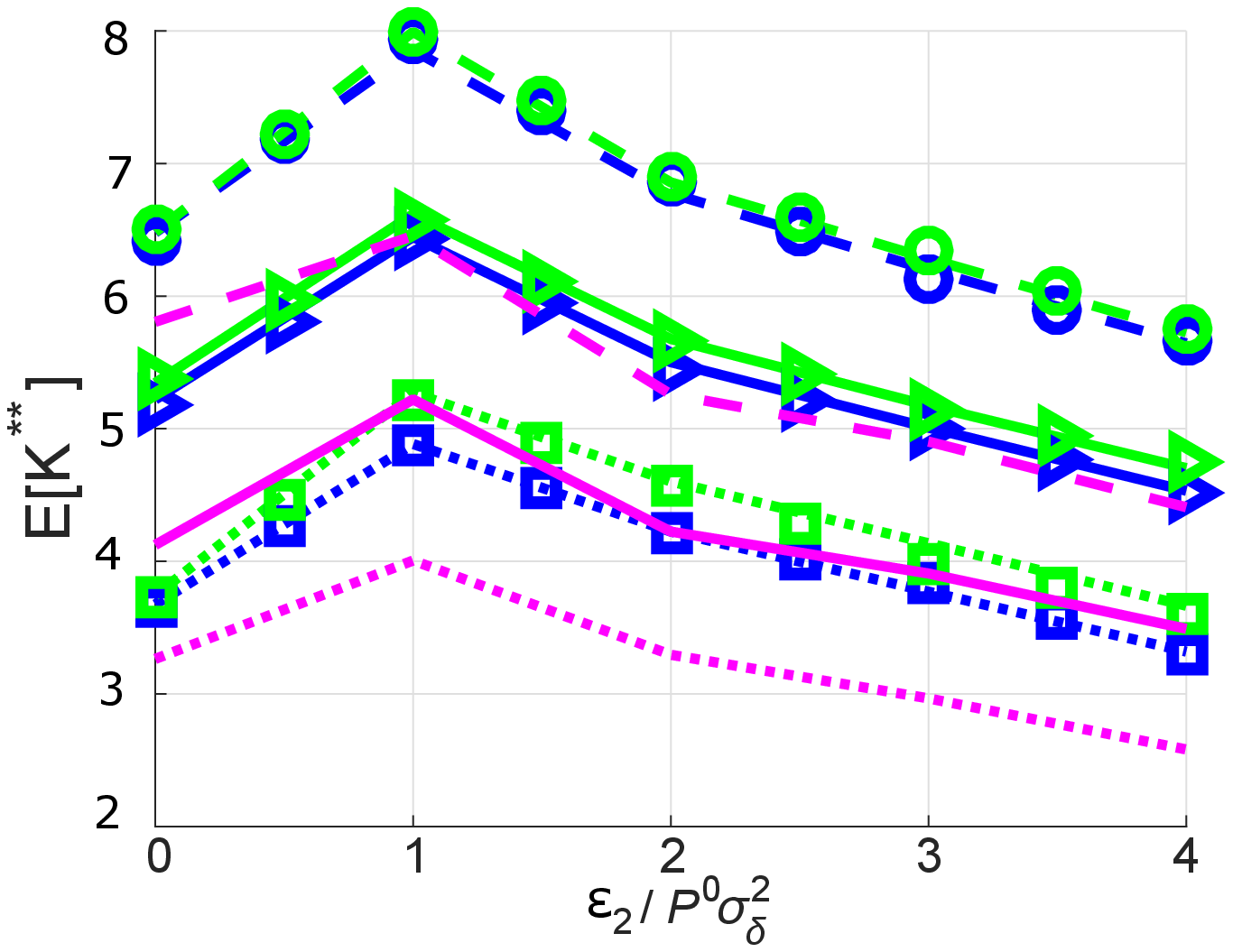}
		\caption{\small Average \# SUs served with minimum rate $R^0_k$ for $\epsilon_1 = \sigma^2_\Delta = \sigma^2_w / P_p$.}
		\label{fig:diff_eta_ep_unif_R0_E_k_starstar}
	\end{subfigure}			
	\caption{\small Impact of margin parameters. $K=20$, $I^0 = -106$ dBm. $R^0_k$ is uniformly distributed in (0,4] bps/Hz.}
	\label{fig:diff_eta_ep_unif_R0}
\end{figure}

\appendices

\section{Derivation of distributions of $P^{\mathcal{S}}_k$ and $P^{\mathcal{S}_0}_k$}
\label{app_pj_dist}
From (\ref{eq:P_k_vup}), $P^{\mathcal{S}}_j$ can be expressed as
\begin{align}
P^{\mathcal{S}}_k =  \frac{(2^{R^0_k} -1)\left(\sigma^2_w + {I}_k+\epsilon_2\right)}{|\mathbf{  \hat{h}}^H_k \mathbf{v}^{\mathcal{S}}_k|^2} = (2^{R^0_k}-1)\frac{X}{Y},
\end{align}
where $X = \sigma^2_w + {I}_k + \epsilon_2$ and $Y = |\mathbf{\hat{h}}^H_k \mathbf{v}^{\mathcal{S}}_k|^2 = |({\mathbf{h}^H_k + \mathbf{\delta}^H_k}) \mathbf{v}^{\mathcal{S}}_k|^2$. The vector $\mathbf{h}^H_k + \mathbf{\delta}^H_k \sim CN(0, \beta_k +\sigma^2_\delta)$ is an isotropic vector, while the vector $\mathbf{v}^{\mathcal{S}}_k$ spans $M-|\mathcal{S}|-L+1$ dimensional space due to $|\mathcal{S}|-1+L$ nulls. Therefore, $Y$ is modeled as a Gamma random variable with shape and scale parameters $M - |\mathcal{S}|-L+1$ and $\beta_k + \sigma^2_\delta$, respectively \cite[lemma 1]{hosseini2014} \cite{chaudhari2017a}, i.e., $Y \sim \Gamma(M - |\mathcal{S}|-L+1, \beta_k + \sigma^2_\delta)$. For simplicity of the analysis, we approximate $Y$ with its average value: $Y=(M - |\mathcal{S}|-L+1)(\beta_k + \sigma^2_\delta)$. This approximation is valid because the variance of $Y$, $(M - |S|-L+1)(\beta_k + \sigma^2_\delta)^2$, is negligible as compared to its mean  $(M - |\mathcal{S}|-L+1)(\beta_k + \sigma^2_\delta)$.

Further, consider the variable $X = \sigma^2_w + {I}_{k} +\epsilon_2$. We define a constant $C=\sigma^2_w + \epsilon_2$. Each term in the summation ${I}_k = \sum_{l \in \mathcal{T}}P_p |{{h}_{lk}}|^2 $ is modeled as a Gamma random variable with distribution $\Gamma(1,P_p\beta_{lk})$. Therefore, the mean of $X$ is $C+ \sum_{l\in \mathcal{T}}P_p \beta_{lk}$, while its variance is $\sum_{l\in \mathcal{T}}{P_p}^2\beta_{lk}^2$. We model $X$ as a Gamma random variable with size and shape parameters $\kappa^p_k$ and $\theta^p_k$, respectively, i.e., $X \sim \Gamma(\kappa^p_k, \theta^p_k)$ Therefore, we have
\begin{align}
\nonumber \mathbb{E}[X] = \kappa^p_k\theta^p_k= C+\sum_{l\in \mathcal{T}} P_p \beta_{lk} ,
\end{align}
\begin{align}
\mathbf{var}[X] = \kappa^p_k(\theta^p_k)^2=  \sum_{l\in \mathcal{T}}{P_p}^2\beta_{lk} ^2.
\label{eq: app_kPk_theta_Pk}
\end{align}
By solving for $\kappa^p_k$ and $\theta^p_k$, we obtain (\ref{eq:thm1_1}). 

Finally, $P^{\mathcal{S}}_k = \frac{2^{R^0_k}-1}{(M - |S|-L+1)(\beta_k + \sigma^2_\delta)}  X = \gamma^{\mathcal{S}}_k X$ is a Gamma random variable with size and shape parameters $\kappa^p_k$ and $\gamma^{\mathcal{S}}_k \theta^p_k$, respectively, i.e., $P^{\mathcal{S}}_k \sim \Gamma(\kappa^p_k, \gamma^{\mathcal{S}}_k \theta^p_k)$. Note that $P^{\mathcal{S}}_k, k=1,2,...$ are independent variables since they are functions of independent random variables ${{h}_{lk}}$.

The distribution of $P^{\mathcal{S}_0}_k$ is obtained by following the above derivation with $Y=(M-K-L+1)(\beta_k +\sigma^2_\delta)$. This is due to the fact that the vector $\mathbf{v}^{\mathcal{S}_0}_k$, in this case, spans $M-K-L+1$ dimensional space due to $K-1+L$ nulls.


\section{Proof of corollary \ref{cor:dmpvup_P_k}}
\label{app_cor_dmpvup_P_k}
Consider two sets $\mathcal{S}_1$ and $\mathcal{S}_2$ containing SU-$k$ such that $\mathcal{S}_2 \subset \mathcal{S}_1$ and $|\mathcal{S}_1|> |\mathcal{S}_2|$. Since $P^{\mathcal{S}_1}_k$ and $P^{\mathcal{S}_2}_k$ are Gamma random variables, their CDFs can be written as follows:
\begin{align}
\nonumber \Pr(P^{\mathcal{S}_1}_k \leq x) = \frac{1}{\Gamma(\kappa^p_k)} \int_{0}^{x/(\gamma^{\mathcal{S}_1}_k\theta^p_k)} t^{\kappa^p_k-1}e^{-t}dt
\\\Pr(P^{\mathcal{S}_2}_k \leq x) =\frac{1}{\Gamma(\kappa^p_k)} \int_{0}^{x/(\gamma^{\mathcal{S}_2}_k\theta^p_k)} t^{\kappa^p_k-1}e^{-t}dt.
\end{align}
Since $\gamma^{\mathcal{S}_1}_k > \gamma^{\mathcal{S}_2}_k$, we get $\Pr(P^{\mathcal{S}_1}_k \leq x) < \Pr(P^{\mathcal{S}_2}_k \leq x)$ or $\Pr(P^{\mathcal{S}_1}_k \geq x) > \Pr(P^{\mathcal{S}_2}_k \geq x)$. 


\section{Derivation of $\Pr(R^{\mathcal{S}}_k \geq y)$}
\label{app_R_k}
The rate achieved at SU-$k$ is given as
\begin{align}
R^{\mathcal{S}}_{k} = \log_2 \left( 1 + \frac{P^{\mathcal{S}}_k |\mathbf{ h}_{k}^H \mathbf{v}^{\mathcal{S}}_k|^2}{\sigma^2_w + I_k+ \sum \limits_{j \in \mathcal{S}, j \neq k} I_{jk} }\right), k \in \mathcal{S},
\end{align}
where $\mathcal{S}$ is the selected set. Substituting for $P_k^{\mathcal{S}}$ from (\ref{eq:P_k_vup}) in the above equation, we get
\begin{align}
\nonumber \Pr(R^{\mathcal{S}}_k \geq y) &=  \\
&\Pr \left(\frac{|\mathbf{ h}_{k}^H \mathbf{v}^{\mathcal{S}}_k|^2}{|\mathbf{ \hat{h}}_{k}^H \mathbf{v}^{\mathcal{S}}_k|^2} \frac{\sigma^2_w + {I}_k + \epsilon_2}{\sigma^2_w + I_k+ \sum \limits_{j \in \mathcal{S}, j \neq k} I_{jk} } \geq \frac{2^y-1}{2^{R^0_k}-1}\right).
\label{eq:app_Pr_R_K}
\end{align}
Similar to the variable $Y$ in the previous appendix, variables $|\mathbf{ h}_{k}^H \mathbf{v}^{\mathcal{S}}_k|^2$ and $|\mathbf{\hat{ h}}_{k}^H \mathbf{v}^{\mathcal{S}}_k|^2$ are approximated with their average values $(M-|\mathcal{S}|-L+1)\beta_k$ and $(M-|\mathcal{S}|-L+1)(\beta_k + \sigma^2_\delta)$, respectively. Let us define $C_y$ as follows:
\begin{align}
C_y = \frac{|\mathbf{ h}_{k}^H \mathbf{v}^{\mathcal{S}}_k|^2}{|\mathbf{ \hat{h}}_{k}^H \mathbf{v}^{\mathcal{S}}_k|^2}\left(\frac{2^{R^0_k}-1}{2^y-1}\right) =  \frac{\beta_k}{\beta_k + \sigma^2_\delta} \left(\frac{2^{R^0_k}-1}{2^y -1}\right).
\end{align}
Substituting the above equation, ${I}_{k}=   \sum_l P_p |{{h}_{lk}}|^2 $, and $I_{jk}=P_j |\mathbf{  h}_{k}^H \mathbf{v}_j|^2$ in (\ref{eq:app_Pr_R_K}), we can rewrite the equation as follows:
\begin{align}
&\nonumber \Pr(R^{\mathcal{S}}_k \geq y) =\\\nonumber &\Pr\left( (1-C_y)\sum_{l\in \mathcal{T}}P_p|{h_{lk}}|^2+ \sum_{j \in \mathcal{S}, j \neq k}P^{\mathcal{S}}_j |\mathbf{  h}_{k}^H \mathbf{v}_j|^2\leq \zeta_y\right)
\\= &\Pr\left(\sum_{l\in \mathcal{T}}Z_{lk} + Z_k \leq \zeta_y \right),
\label{eq:app_P_R_K_2}
\end{align}
where $Z_{lk}=(1-C_y)P_p|{h_{lk}}|^2, Z_k= \sum_{j \in \mathcal{S}, j \neq k}P^{\mathcal{S}}_j |\mathbf{  h}_{k}^H \mathbf{v}^{\mathcal{S}}_j|^2 $, and $\zeta_y$ as defined in (\ref{eq:kz_thetaz}). Since $Z_{lk}$ and $Z_k$ are independent random variables, the cdf in the RHS of (\ref{eq:app_P_R_K_2}) can be expressed in terms of Fourier transforms of the characteristic functions these variables. Therefore, we derive the characteristic functions of $Z_{lk}$ and $Z_k$. The variable $Z_{lk}$ is a Gamma random variable $\sim \Gamma(1,(1-C_y)P_p\beta_{lk})$ with characteristic function:
\begin{align}
\phi_{lk}(jt) = (1-\theta^z_{lk}jt),
\label{eq:app_phi_lk}
\end{align}
where $\theta^z_{lk} = (1-C_y)P_p\beta_{lk}$.
Further, the variable $Z_k= \sum_{j \in \mathcal{S}, j \neq k}P^{\mathcal{S}}_j |\mathbf{h}_{k}^H \mathbf{v}^{\mathcal{S}}_j|^2$ can be written as $Z_k= \sum \limits_{j \in \mathcal{S}, j \neq k}P^{\mathcal{S}} |\mathbf{\hat{h}}_{k}^H \mathbf{v}^{\mathcal{S}}_j  + \mathbf{\delta}_k^H \mathbf{v}^{\mathcal{S}}_j|^2$ $=  \sum_{j \in \mathcal{S}, j \neq k}P^{\mathcal{S}}_j |\mathbf{\delta}_k^H \mathbf{v}^{\mathcal{S}}_j|^2$. The second equality follows from $\mathbf{\hat{h}_k v^{\mathcal{S}}_j} =0$ due to zero forcing beamforming. The term $|\mathbf{\delta}_k^H \mathbf{v}^{\mathcal{S}}_j|^2$ is the projection of isotropic vector $\delta_k \sim CN(0,\sigma^2_\delta)$ on uncorrelated space spanned by $\mathbf{v}^{\mathcal{S}}_k$, which gives $|\mathbf{\delta}_k^H \mathbf{v}^{\mathcal{S}}_j|^2 \sim \Gamma(1,\sigma^2_\delta)$ \cite[lemma 3]{hosseini2014}. Therefore, $P^{\mathcal{S}}_j |\mathbf{\delta}_k^H \mathbf{v}^{\mathcal{S}}_j|^2$ is a product of two Gamma random variables and is approximated as a Gamma random variable \cite{coelho2014}. The mean and the variance of $P^{\mathcal{S}}_j |\mathbf{\delta}_k^H \mathbf{v}^{\mathcal{S}}_j|^2$ are given below:
\begin{align}
\nonumber\mathbb{E}[P^{\mathcal{S}}_j |\mathbf{\delta}_k^H \mathbf{v}^{\mathcal{S}}_j|^2]&= \sigma^2_\delta \gamma^{\mathcal{S}}_j \theta^p_j \frac{\Gamma(\kappa^p_j+1)}{\Gamma(\kappa^p_j)}
\\ \mathbf{var}[P^{\mathcal{S}}_j |\mathbf{\delta}_k^H \mathbf{v}^{\mathcal{S}}_j|^2]&= (\sigma^2_\delta \gamma^{\mathcal{S}}_j \theta^p_j)^2 \frac{2\Gamma(\kappa^p_j+2)\Gamma(\kappa^p_j)- \Gamma^2(\kappa^p_j+1)}{\Gamma^2(\kappa^p_j)}.
\label{eq:app_mean_var_zk}
\end{align}
The variable $Z_k$ is modeled as a Gamma random variable with shape parameter $\kappa^z_k$ and shape scale parameter $\theta^z_k$. The parameters are computed using moment matching method \cite[lemma 3]{hosseini2014} by solving the following two equations for $\kappa^z_k$ and $\theta^z_k$:
\begin{align}
\nonumber \kappa^z_k{\theta^z_k}=\sum_{j \in \mathcal{S}, j \neq k} \mathbb{E}[P^{\mathcal{S}}_j |\mathbf{\delta}_k^H \mathbf{v}^{\mathcal{S}}_j|^2],
\\\kappa^z_k({\theta^z_k})^2 =\sum_{j \in \mathcal{S}, j \neq k} \mathbf{var}[P^{\mathcal{S}}_j |\mathbf{\delta}_k^H \mathbf{v}^{\mathcal{S}}_j|^2].
\label{eq:app_kz_thetaz}
\end{align}
Expressions in (\ref{eq:kz_thetaz}) follow from (\ref{eq:app_mean_var_zk}) and (\ref{eq:app_kz_thetaz}). The characteristic function of the Gamma random variable $Z_k$ is as follows \cite{Garcia2008}:
\begin{align}
\phi_k(jt) = (1-\theta^z_k jt )^{-\kappa^z_k}.
\label{eq:app_phi_k}
\end{align}
Since $Z_{lk}$ and $Z_k$ are independent random variables, the cdf in the RHS of (\ref{eq:app_P_R_K_2}) can be written in terms of the Fourier transform of the product of characteristic functions of these random variables as follows:
\begin{align}
\Pr(R^{\mathcal{S}}_k \geq y) = \frac{1}{2\pi}\int \limits_{-\infty}^{\zeta_y} \int\limits_{-\infty}^{\infty}\left(\prod_{l\in \mathcal{T}}\phi_{lk}(jt)\right) \phi_k(jt) e^{-j2\pi wt} dt dw.
\end{align}
Substituting for $\phi_{lk}(jt)$ and $\phi_k(jt)$ from (\ref{eq:app_phi_lk}) and (\ref{eq:app_phi_k}), respectively, we obtain the expression (\ref{eq:Pr_R_k}). The expression for $\Pr(R^{\mathcal{S}}_k \geq y)$ under DMP without vector update is computed by following the above derivation and replacing $\gamma^{\mathcal{S}}_j$ by $\gamma^{\mathcal{S}_0}_j$ in (\ref{eq:app_mean_var_zk}).

If we have $\sigma^2_\delta = 0$, $\mathbf{\hat{h}}_{k}=\mathbf{{h}}_{k}$, and $I_{jk}=0$. Then, substituting $y=R^0_k$ in
(\ref{eq:app_Pr_R_K}), we get $\Pr(R^\mathcal{S}_k \geq R^0_k)=1$.

%
%
%


\section{Derivation of $\mathbb{E}[I_{lk}]$}
\label{app_I_kl}
In DMP, we have $\mathbb{E}\left[I_{kl}\right] =\mathbb{E}[P^{\mathcal{S}}_k |\mathbf{\Delta}^H_{l0}\mathbf{v}^{\mathcal{S}}_k|^2]$. The expression $\mathbb{E}[P^{\mathcal{S}}_k |\mathbf{\Delta}^H_{l0}\mathbf{v}^{\mathcal{S}}_k|^2]$ is obtained by following the derivation of $\mathbb{E}[P^{\mathcal{S}}_j |\mathbf{\delta}_k^H \mathbf{v}^{\mathcal{S}}_j|^2]$ in Appendix \ref{app_R_k} and replacing $\mathbf{\delta}_k$ and $\sigma^2_{\delta}$ with $\mathbf{\Delta}_{l0}$ and $\sigma^2_{\Delta}$, respectively. In DMP without vector update, the expression for $\mathbb{E}\left[I_{kl}\right]$ is obtained by replacing $\gamma^{\mathcal{S}}_k$ with $\gamma^{\mathcal{S}_0}_k$ in (\ref{eq:I_kl_1}).


\bibliographystyle{IEEEtran}
\bibliography{IEEEabrv,final_refs}

\end{document}